\newcounter{theorem}
\renewcommand\thetheorem{\arabic{section}.\arabic{theorem}}
\newenvironment{lemma}{\par\medskip\noindent\begingroup{\bf Lemma
             \stepcounter{theorem}\thetheorem.}\ \itshape
             \def\@currentlabel{\thetheorem}}{\endgroup\par\medskip}
\newenvironment{theorem}{\par\medskip\noindent\begingroup{\bf Theorem
             \stepcounter{theorem}\thetheorem.}\ \itshape
             \def\@currentlabel{\thetheorem}}{\endgroup\par\medskip}
\newenvironment{proof}{\par\noindent{\bf Proof.} }{\proofbox\par\medskip}
\def\proofbox{\hfill{\ensuremath\Box}}
\begin{document}
\title[]{General soliton solution to a nonlocal nonlinear Schr\"odinger
equation with zero and nonzero boundary conditions}
\author{Bao-Feng Feng$^1$, Xu-Dan Luo$^2$, Mark J. Ablowitz$^3$, Ziad H. Musslimani$^4$}
\address{$^1$~School of Mathematical and Statistical Sciences,
The University of Texas Rio Grande Valley, Edinburg, TX 78539}
\address{$^2$~Department of Mathematics, State University of New York at Buffalo, Buffalo, NY 14260}
\address{$^3$~Department of Applied Mathematics, University of Colorado, Boulder, Colorado 80309-0526}
\address{$^4$~Department of Mathematics, Florida State University, Tallahasse, Florida 32306-4510}
\date{\today}

\begin{abstract}
{General soliton solutions} to a nonlocal nonlinear Schr\"{o}dinger (NLS) equation with PT-symmetry for both zero and nonzero boundary conditions {are considered} via the combination of Hirota's bilinear method and the Kadomtsev-Petviashvili (KP) hierarchy reduction method. First, general $N$-soliton solutions with zero boundary conditions are constructed. Starting from the tau functions of the two-component KP hierarchy, {it is shown that} they can be expressed in terms of either Gramian or double Wronskian determinants. {On the contrary}, from the tau functions of  single component KP hierarchy, general soliton solutions to the nonlocal NLS equation with nonzero boundary conditions are obtained. All possible soliton solutions to nonlocal NLS with Parity (PT)-symmetry for both zero and nonzero boundary conditions are found in the present paper.
\end{abstract}


\pacs{02.30.Ik, 05.45.Yv}


\section{Introduction}
The study of the nonlinear Schr\"odinger (NLS) equation {and its multi-component generalization}
lies at the forefront of {research in} applied mathematics and mathematical physics. {This is the case} since {they have} been recognized as generic models for describing the evolution of slowly varying wave packets in general nonlinear wave system \cite{Benney1967,Ablowitzbook,APT}.
In 1972 Zakharov and Shabat found that the NLS equation possesses a Lax
pair and can be solved/linearized by inverse scattering \cite{ZakharovShabat}.
Motivated by these results in 1973 Ablowitz, Kaup, Newell and Segur
\cite{AKNS} generalized the linear operators used by Zakharov and Shabat; they
showed that the NLS, sine-Gordon\cite{AKNSsG}, modified-KdV, KdV all could be
solved/linearized by inverse scattering. Soon afterwards in 1974, AKNS
\cite{AKNSIST} developed a general framework to find integrable systems solvable by
what they termed the Inverse Scattering Transform (IST). The method
was associated with classes of equations (later called recursion operators)
and was applied to find solutions to the initial value problem with rapidly
decaying data on the line.

The NLS and coupled NLS equations arise in a variety of physical contexts such as nonlinear optics \cite%
{HasegawaTappert1,HasegawaTappert2,Agrawalbook,HasegawaKodama,AgrawalKivsharbook}%
, Bose-Einstein condensates \cite{BECReview}, water waves {\cite{Zakharov1968, BenneyRoskes1969}}
and plasma physics \cite{ZakharovPlasma}{, amongst many others.}

The idea in \cite{AKNSIST} was to consider the scattering problem
\begin{equation}  \label{AKNS_space}
v _{x}=Xv = \left(
\begin{array}{cc}
-\mathrm{i}k & q(x,t) \\
r(x,t) & \mathrm{i}k%
\end{array}%
\right)\,,
\end{equation}
\begin{equation}
\label{AKNS_time}
v_{t}=Tv=\left(
\begin{array}{cc}
2\mathrm{i}k^{2}+\mathrm{i}qr & -2kq-\mathrm{i}q_{x} \\
-2kr+\mathrm{i}r_{x} & -2\mathrm{i}k^{2}-\mathrm{i}qr%
\end{array}%
\right)\,.
\end{equation}%
The compatibility condition of the linear spectral problem (\ref{AKNS_space})--(\ref{AKNS_time})
leads to
\begin{equation}
\left\{
\begin{array}{l}
\displaystyle\mathrm{i}q_{t}=q_{xx}-2rq^{2}\,, \\
\displaystyle -\mathrm{i}r_{t}=r_{xx}-2qr^{2}\,.%
\end{array}%
\right.  \label{AKNS_bilinear}
\end{equation}%
Under the standard symmetry reduction
\begin{equation}
r(x,t)=\sigma q^{\ast }(x,t)\,, \quad \sigma = \pm 1\,,
\end{equation}
the system (\ref{AKNS_bilinear})  gives the classical NLS
equation
\begin{equation}  \label{NLS}
\mathrm{i}q_{t}(x,t)=q_{xx}(x,t)-2\sigma q^{\ast }(x,t)\,q^{2}(x,t)\,.
\end{equation}
Recently, an integrable nonlocal NLS equation
\begin{equation}  \label{NLS_PT}
\mathrm{i}q_{t}(x,t)=q_{xx}(x,t)-2\sigma q^{\ast }(-x,t)\,q^{2}(x,t)
\end{equation}
was found in \cite{AblowitzMusslimani} under a new symmetry reduction
\begin{equation}  \label{PT-reduction}
r(x,t)=\sigma q^{\ast }(-x,t)\,,\quad \sigma = \pm 1\,.
\end{equation}
This nonlocal equation is shown to be related to an unconventional magnetic system \cite{ApplicNlclNLS16}. It is parity-time (PT) symmetric, i.e., it is invariant under
the joint transformations of $x \to -x$, $t \to t$ and
complex conjugation, thus is related to a hot research area of contemporary
physics \cite{Bender98,KonotopYangRMP}.
Due to this potential application, it has been shown that different symmetry reductions from the linear problem of general AKNS hierarchy and other integrable hierarchies can {also} lead to various types of new nonlocal equations, which have been extensively studied \cite{AblowitzMusslimani3}--\cite{JYangNonlocalNLS}.  Typical examples are the
reverse space-time nonlocal NLS equation and the reverse time nonlocal NLS
equation, the complex/real space-time Sine-Gordon equation \cite%
{AblowitzMusslimani3,AblowitzMusslimani4,AblowitzLuoMusslimani,AblowitzFengLuoMusslimani}, the complex/real reverse
space-time mKdV equation \cite{AblowitzMusslimani3,AblowitzMusslimani4}, the
nonlocal derivative NLS equation \cite{ZXZhou}, and the multi-dimensional
nonlocal Davey-Stewartson equation \cite{AblowitzMusslimani4,Fokas2015}. On
the other hand, corresponding to the classical semi-discrete NLS equation \cite{AblowitzLadik,AL2}, the nonlocal NLS
equation also admits its semi-discrete version \cite{AblowitzMusslimani2} and
{multi-component generalizations} \cite{AblowitzMusslimani4,ZenyanAML15}.

It has been known that that the inverse scattering transform (IST) has been successfully developed for the NLS, coupled NLS, as well as their semi-discrete analogues \cite{Prinari06}--\cite{BPrinari}.
Moreover, the IST to solve the initial-value problem
with vanishing boundary condition for the nonlocal NLS equation has been
developed in \cite{AblowitzMusslimani,AblowitzMusslimani3,AblowitzMusslimani4}.
For pure soliton solutions corresponding to reflectionless potentials, one obtains
one- and two-soliton solutions; these solutions can have singularities. 
In particular, the {decaying} one-soliton solution is found to be \cite{AblowitzMusslimani}
\begin{equation}
q=\frac{2(\eta +\bar{\eta})e^{-2\bar{\eta}x-4\mathrm{i}{\bar{\eta}}^{2}t+%
\mathrm{i}\theta }}{1-e^{-2(\eta +\bar{\eta})x+4\mathrm{i}(\eta ^{2}-{\bar{%
\eta}}^{2})t+\mathrm{i}(\theta +\bar{\theta})}}\,.  \label{one-bright}
\end{equation}%
{This solution is singular at $x=0$ when $\eta \neq \bar{\eta}$ and  $4(\eta^2-\bar{\eta}^2)t + (\theta +\bar{\theta})= 2n \pi, n \in \mathbb{Z}$. When $\eta = \bar{\eta}, \theta +\bar{\theta} \neq 2n \pi, n \in \mathbb{Z}$ the solution is not singular. }
It is also noted that the same solution has been found in \cite{ZhuMa2,DajunAML17} via Darboux transformations and AKNS reductions, respectively.

The inverse scattering transform (IST) to solve the initial-value problem
with nonzero boundary conditions has also been developed in \cite%
{AblowitzLuoMusslimani}. To be specific, under the following boundary
conditions
\begin{equation}
q(x,t)\rightarrow q_{\pm }(t)=q_{0}e^{i(\alpha t+\theta _{\pm })},~\ as\
x\rightarrow \pm \infty ,  \label{E:NZBC}
\end{equation}%
where $q_{0}>0$, $0\leq \theta _{\pm }<2\pi $, the following cases are {studied.}
\begin{itemize}
  \item $\sigma =-1$, {$\Delta=\theta _{+}-\theta _{-}=\pi $}: {a} one-soliton solution
is found{:}
\begin{equation}
q=q_{0}e^{\mathrm{i(}{2q}_{0}^{2}t+\theta _{+}-\pi )}\tanh [q_{0}x-\mathrm{i}%
\theta _{\ast }]\,
\end{equation}
where $\theta _{\ast }=\frac{1}{2}(\theta _{+}+\theta _{1}+\pi ),$ {where $\theta _{1}$ is related
to the associated scattering data (normalization constant).} This
solution is stationary in space and oscillating in time.
  \item $\sigma =-1$, {$\Delta=0$}: in this case, there is no 'proper
exponentially decaying' pure one soliton solution. The simplest decaying
pure reflectioness potential generates a two-soliton solution.
  \item $\sigma =1$, {$\Delta=0 $}: 
 {in this case only an even number of soliton solutions arise; they are related to an even number ($2N$) of eigenvalues.}
  \item $\sigma =1$, {$\Delta=\pi$}: in this case, there is no eigenvalues/solitons.
\end{itemize}
Regardless of the success of the IST for the nonlocal NLS equation, it is still not easy to come up with its general soliton solution in closed form by this method. On the other hand, Hirota's bilinear method is a direct and powerful method in finding multi-soliton solutions to soliton equations \cite{HirotaBook}. In the early 1980's, suggested by Hirota's direct method, Sato discovered that the soliton solutions to the Kadomtsev-Petviashvili (KP) equation and its hierarchy (named tau-functions afterwards) satisfy the so-called Pl\"ucker relations, and found that the totally of solutions to the KP hierarchy forms an infinite dimensional Grassmann manifold \cite{Sato81}. Later on, Ueno and Takasaki extended Sato's discovery to include two-dimensional Toda lattice (2DTL) hierarchy \cite{UenoTakasaki}; Date, Jimbo and Miwa further developed Sato's idea with transformation groups theory, and presented Lie algebraic classification of soliton equations \cite{JM83,DKJM83,MJD}.

The KP-hierarchy reduction method was  {first} developed by the Kyoto school
\cite{JM}, and later was used to obtain soliton solutions in the NLS
equation, the modified KdV equation, the Davey-Stewartson equation and the
coupled higher order NLS equations \cite{OhtaRIMS1989,GHNO}. Recently, this
method has been applied to derive dark-dark soliton solution in two-coupled
NLS equation of the mixed type \cite{OhtaJianke}. {Unlike the} 
the inverse scattering transform method \cite{Prinari06}, and Hirota's
bilinear method \cite{HirotaBook}, the KP-hierarchy reduction method starts
with the general KP hierarchy including the two-dimensional Toda-hierarchy
\cite{MiyakeOhtaGram} and derives the general soliton solution in either
determinant or pfaffian form reduced directly the tau functions of the KP
hierarchy.
{In spite of many successes of the KP-hierarchy reduction method, constructing general soliton solutions by this method to the nonlocal NLS equation remains an unsolved and challenging problem, especially for the nonzero boundary conditions. The difficulty lies in the simultaneous realization of both the nonlocal and complex conjugate reductions. Therefore, it motivates the present work, which intends to construct general soliton solutions in determinant forms to the nonlocal NLS equation with both the zero and nonzero boundary conditions.}

 In the present paper, we find solitons that are both decaying and have nonzero boundary values.
For the decaying boundary condition, we construct the general $N$-soliton solutions expressed in both the Gram-type and the double-Wronskin determinants. The explicit one-soliton solution agrees with the solution found by the inverse scattering transform \cite{AblowitzMusslimani,JYangNonlocalNLS}, which is also listed in (\ref{one-bright}). For the nonzero boundary condition, the general soliton solutions are constructed for all cases except for the case: $\sigma =1$, $\Delta=\pi$ where there is no soliton solution. To be specific, for the case of $\sigma =-1$, $\Delta=\pi$, the solution (\ref{Sig-1Detpi}) is the same as above; for the case of $\sigma =1$, $\Delta=0$, it is true that only the determinant soliton solution $2N \times 2N$ can be constructed, which is consistent with the fact that only the even number of soliton solutions with an even number of eigenvalues exist. The simplest solution based on the tau-functions found here corresponds to (5.90) in \cite{AblowitzLuoMusslimani} but with a simpler form. In the last, for the case of $\sigma =-1$, $\Delta=0$, the soliton solution (\ref{Sig01Det0}) found in this paper is the same as the one found \cite{AblowitzLuoMusslimani}.

The outline of the present paper is organized as follows. In section 2, we
construct the bright soliton solution to the nonlocal NLS equation with {a} zero boundary condition.
Both the Gram-type and double Wronski-type solution {start} 
from the tau functions in Gram-type or double Wronski-type of two-component KP hierarchy.
In section 3, general multi-soliton solutions for the nonlocal NLS equation of $\sigma=1$  and of $\sigma=-1$ are constructed, which recover all possible solutions found in \cite{AblowitzLuoMusslimani}.
The paper {concludes with} 
comments and remarks in section 4.
\section{General soliton solution with zero boundary condition}
In this section, we consider general soliton solutions to the nonlocal NLS equation (\ref{NLS_PT}).
under the boundary condition $q(x,t) \to 0$ as $x \to \pm \infty$. This type of soliton solution is usually called bright soliton solution. We first give the bilinear forms for  the nonlocal NLS equation (\ref{NLS_PT}).
By introducing the dependent variable transformation
\begin{equation}
q(x,t)=\frac{g(x,t)}{f(x,t)},
\end{equation}
the nonlocal NLS equation (\ref{NLS_PT})  is converted into the following
bilinear equations
\begin{equation}
\left\{
\begin{array}{l}
\displaystyle(\mathrm{i}D_{t}-D_{x}^{2})g(x,t)\cdot f(x,t)=0,\quad \\
\displaystyle D_{x}^{2}f(x,t)\cdot f(x,t)+2\sigma g(x,t)g^{\ast }(-x,t)=0\,,%
\end{array}%
\right.  \label{NLSPT_bilinear}
\end{equation}%
where the Hirota's bilinear operator is defined as
\[
D_{x}^{n}D_{t}^{l}f(x,t)\cdot g(x,t)=\left( \frac{\partial }{\partial x}-%
\frac{\partial }{\partial x^{\prime }}\right) ^{n}\left( \frac{\partial }{%
\partial t}-\frac{\partial }{\partial t^{\prime }}\right)
^{l}f(x,t)g(x^{\prime },t^{\prime })\bigg|_{x=x^{\prime },t=t^{\prime }}.
\]%
Similar to the classical NLS equation, the bright soliton solutions to the nonlocal NLS can be obtained from the tau functions of two-component KP hierarchy. It is known that the tau functions have two
different forms: one is the Gram-type, the other is the Wronski-type. They are basically equivalent, however, each form has advantage and disadvantage in view of different symmetries. In what follows, we will derive two types of soliton solutions via KP hierarchy reduction method.
\subsection{General bright soliton solution expressed by Gramian determinant}
In this subsection, we will construct soliton solution in terms of Gram-type determinant.  To this end, let us start with a Gram-type determinant
expression of the tau functions for two-component KP hierarchy,
\begin{equation}
\tau _{0}=\left\vert A\right\vert\,,
\end{equation}

\begin{equation}
\tau _{1}=\left\vert
\begin{array}{cc}
A & \Phi ^{T} \\
-\bar{\Psi} & 0%
\end{array}%
\right\vert, \quad \tau _{-1}=\left\vert
\begin{array}{cc}
A & \Psi ^{T} \\
-\bar{\Phi} & 0%
\end{array}%
\right\vert
\end{equation}
where the elements of matrix $A$ are
\[
a_{ij}=\frac{1}{p_{i}+\bar{p}_{j}}e^{\xi _{i}+\bar{\xi}_{j}}+\frac{1}{q_{i}+%
\bar{q}_{j}}e^{\eta _{i}+\bar{\eta }_{j}}
\]
with
\[
\xi _{i}=p_{i}x_{1}+p_{i}^{2}x_{2}+\xi _{i0},\quad \bar{\xi}_{j}=\bar{p}%
_{j}x_{1}-\bar{p}_{j}^{2}x_{2}+\bar{\xi}_{j0},
\]%

\[
\eta _{i}=q_{i}y_{1}+\eta _{i0},\quad \bar{\eta }_{j}=\bar{q}%
_{j}y_{1}+\bar{\eta }_{j0},
\]
and $\Phi$, $\bar{\Phi}$, $\Psi$, $\bar{\Psi}$ are row vectors defined by
\[
\Phi =\left( e^{\xi _{1}},\cdots ,e^{\xi _{N}}\right) \,,\quad  \Psi =\left(
e^{\eta _{1}},\cdots ,e^{\eta _{N}}\right) \,,
\]%
\[
\bar{\Phi}=\left( e^{\bar{\xi}_{1}},\cdots ,e^{\bar{\xi}_{N}}\right) \,,\quad
\bar{\Psi}=\left( e^{\bar{\eta}_{1}},\cdots ,e^{\bar{\eta}_{N}}\right) \,.
\]
It can be shown below that the tau functions given above satisfy the following two bilinear
equations
\begin{equation}
\label{bilinear_bright}
\left\{
\begin{array}{l}
\displaystyle(D_{x_{2}}-D_{x_{1}}^{2})\tau _{1}\cdot \tau _{0}=0\,, \\[5pt]
\displaystyle D_{x_{1}}D_{y_{1}}\tau _{0}\cdot \tau _{0}=-2\tau _{1}\tau
_{-1}\,.%
\end{array}%
\right.
\end{equation}
The proof is given as follows.  Since%
\[
\partial _{x_{1}}\tau _{0}=\left\vert
\begin{array}{cc}
A & \Phi ^{T} \\
{-}\bar{{\Phi }} & 0%
\end{array}%
\right\vert, \quad \partial _{x_{1}}\partial _{y_{1}}\tau _{0}=\left\vert
\begin{array}{ccc}
A & \Phi ^{T} & \Psi ^{T} \\
{-}\bar{{\Phi }} & 0 & 0 \\
{-}\bar{{\Psi }} & 0 & 0%
\end{array}%
\right\vert\,,
\]
then based on the Jacobi identify for the determinant, we have
\[
(2\partial _{x_{1}}\partial _{y_{1}})\tau _{0}\times \tau _{0}=2(\partial
_{x_{1}}\tau _{0})\times (\partial _{x_{1}}\tau _{y})-2\tau _{1} \times \tau
_{-1}\,,
\]
which is exactly the second bilinear equation.

Moreover, we can easily verify the following relations
\[
\partial _{x_{2}}\tau _{0}=\left\vert
\begin{array}{cc}
A & \Phi _{x_{1}}^{T} \\
{-}\bar{{\Phi }} & 0%
\end{array}%
\right\vert -\left\vert
\begin{array}{cc}
A & \Phi _{x_{1}}^{T} \\
{-}\bar{\Phi }_{x_{1}} & 0%
\end{array}%
\right\vert\,,
\]

\[
\partial _{x_{1}}^{2}\tau _{0}=\left\vert
\begin{array}{cc}
A & \Phi _{x_{1}}^{T} \\
{-}\bar{{\Phi }} & 0%
\end{array}%
\right\vert +\left\vert
\begin{array}{cc}
A & \Phi _{x_{1}}^{T} \\
{-}\bar{\Phi }_{x_{1}} & 0%
\end{array}%
\right\vert\,, \quad \partial _{x_{1}}\tau _{1}=\left\vert
\begin{array}{cc}
A & \Phi _{x_{1}}^{T} \\
{-}\bar{{\Psi }} & 0%
\end{array}%
\right\vert\,,
\]
and
\[\partial _{x_{1}}^{2}\tau _{1}=\left\vert
\begin{array}{ccc}
A & \Phi ^{T} & \Phi _{x_{1}}^{T} \\
{-}\bar{{\Phi }} & 0 & 0 \\
{-}\bar{{\Psi }} & 0 & 0%
\end{array}%
\right\vert +\left\vert
\begin{array}{cc}
A & \Phi _{x_{1}x_{1}}^{T} \\
{-}\bar{{\Psi }} & 0%
\end{array}%
\right\vert\,,
\]

\[
\partial _{x_{2}}\tau _{1}=-\left\vert
\begin{array}{ccc}
A & \Phi ^{T} & \Phi _{x_{1}}^{T} \\
{-}\bar{{\Phi }} & 0 & 0 \\
{-}\bar{{\Psi }} & 0 & 0%
\end{array}%
\right\vert +\left\vert
\begin{array}{cc}
A & \Phi _{x_{1}x_{1}}^{T} \\
{-}\bar{{\Psi }} & 0%
\end{array}%
\right\vert\,,
\]
where $\Phi _{x_{1}}$ and $\Phi _{x_{1}x_{1}}$ represent the first and second derivatives of $\Phi$ respect to
$x_1$, respectively. Therefore we have
\[
(\partial _{x_{2}}-\partial _{x_{1}}^{2})\tau _{0}=2\left\vert
\begin{array}{cc}
A & \Phi _{x_{1}}^{T} \\
{-}\bar{{\Psi }} & 0%
\end{array}%
\right\vert\,,
\]

\[
(\partial _{x_{2}}-\partial _{x_{1}}^{2})\tau _{1}=-2\left\vert
\begin{array}{ccc}
A & \Phi ^{T} & \Phi _{x_{1}}^{T} \\
{-}\bar{{\Phi }} & 0 & 0 \\
{-}\bar{{\Psi }} & 0 & 0%
\end{array}%
\right\vert\,.
\]
From the Jacobi identity for the determinant, we have%
\[
\left((\partial _{x_{2}}-\partial _{x_{1}}^{2}) \tau _{1} \right)\times \tau _{0}=\left((\partial
_{x_{2}}-\partial _{x_{1}}^{2})\tau _{0}\right) \times \tau _{1}-2(\partial
_{x_{1}}\tau _{1})\times (\partial _{x_{1}}\tau _{f
0})
\]
which gives the first bilinear equation. The proof is done.

Next, we show the reduction processes. By row
operations $\tau _{0}$ can be rewritten as
\[
\tau _{0}=\prod_{j=1}^{N}e^{\xi _{j}+\bar{\xi}_{j}}\left\vert
a_{ij}^{{\prime }}\right\vert_{1\le i,j \le N} =\prod_{j=1}^{N}e^{\xi _{j}+\bar{\xi}%
_{j}}\left\vert \frac{1}{p_{i}+\bar{p}_{j}}+\frac{1}{q_{i}+\bar{q}_{j}}%
e^{(\eta _{i}-\xi _{i})+(\bar{\eta}_{j}-\bar{\xi}_{j})}\right\vert \,
\]%
where
\[
\eta _{i}-\xi _{i}=q_{i}y_{1}-p_{i}x_{1}+\cdots \,,
\quad \bar{\eta}_{j}-\bar{\xi}_{j}=\bar{q}_{j}y_{1}-\bar{p}_{j}x_{1}+\cdots \,.
\]
Thus if we impose constraints on parameters
\begin{equation}
q_{j}=p_{j}\,,\quad \bar{q}_{j}=\bar{{p}}_{j}\,,\quad j=1,\cdots ,N,
\end{equation}%
then the following relation holds
\[
(\partial _{x_{1}}+{\partial _{y_{1}})}a_{ij}^{\prime}=0
\]%
which implies
\begin{equation}
\label{dimension_redcution}
\partial _{x_{1}}\tau _{0}=-\partial _{y_{1}}\tau _{0}.
\end{equation}
Consequently, the second bilinear equation in (\ref{AKNS_bilinear}) becomes
\begin{equation}
\label{AKNS_bilinearb}
D_{x_{1}}^{2}\tau _{0}\cdot \tau _{0}=2\tau _{1}\tau _{-1}\,.
\end{equation}
Due to the dimension reduction (\ref{dimension_redcution}),
$y_{1}$ becomes a dummy variable, which can be treated as zero.
We then have the following Lemma.
\begin{lemma}
Assume $x_{1}=x$ is real and $x_{2}=-\mathrm{i}t$ is
purely imaginary variable. If $p_{j}$ and $\bar{p}_{j}$ ($j=1,\cdots, N$) are real, ${\eta }_{j0}$, $\bar{\eta}_{j0}$ are puly imaginary, or the subset of ($p_{j}$, ${\eta }_{j0}$) or ($\bar{p}_{j}$, $\bar{\eta}_{j0}$) occurs in pair such that $p_{k}=p^*_{k'}$, ${\eta }_{k0}={\eta}^*_{k'0}$ or $\bar{p}_{k}=\bar{p}^*_{k'}$, $\bar{\eta}_{k0}=\bar{\eta}^*_{k'0}$, then
\begin{equation}\label{bright_symmetry}
  \tau _{0}(x,t)=C \tau^{\ast} _{0}(-x,t)\, \quad \tau_{-1}(x,t)=C \tau^{\ast} _{1}(-x,t)\,,
\end{equation}
where $C=C_1 C_2$, $C_1=\prod_{j=1}^{N} e^{({\xi _{j}+\bar{\xi}_{j}})}$,
$C_2=\prod_{j=1}^{N} e^{(\eta _{j}+\bar{\eta }_{j})}$.
\end{lemma}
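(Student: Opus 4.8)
The plan is to show that, under the stated conditions, the operation ``complex conjugate and send $x\mapsto-x$'' returns the matrix $A$ and the border vectors appearing in $\tau_{\pm1}$ to themselves, up to multiplication on the left and on the right by explicit diagonal matrices and (in the conjugate-pair case) a simultaneous permutation of rows and columns by the pairing involution $\pi$. Left/right diagonal scaling multiplies a determinant by the product of the scaling entries, and a simultaneous row--column permutation contributes $(\mathrm{sgn}\,\pi)^{2}=1$; so both identities in (\ref{bright_symmetry}) reduce to identifying that product of scaling entries with $1/C$.

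First I would record how the exponents transform. With $x_{1}=x$ real, $x_{2}=-\mathrm{i}t$ purely imaginary and $y_{1}=0$ (the dummy variable left after the dimension reduction $q_{j}=p_{j}$, $\bar q_{j}=\bar p_{j}$), one computes $\xi_{i}(-x,t)^{\ast}=-p_{i}^{\ast}x-(p_{i}^{\ast})^{2}x_{2}+\xi_{i0}^{\ast}$, the analogous formula for $\bar\xi_{j}$, and $\eta_{i}(-x,t)^{\ast}=\eta_{i0}^{\ast}$, $\bar\eta_{j}(-x,t)^{\ast}=\bar\eta_{j0}^{\ast}$. The hypotheses ($p_{i},\bar p_{j}$ real with $\eta_{i0},\bar\eta_{j0}$ purely imaginary, or the corresponding conjugate-pairing) are precisely what make these collapse to $\xi_{i}(-x,t)^{\ast}=-\xi_{i}(x,t)$, $\bar\xi_{j}(-x,t)^{\ast}=-\bar\xi_{j}(x,t)$, $\eta_{i}(-x,t)^{\ast}=-\eta_{i}(x,t)$, $\bar\eta_{j}(-x,t)^{\ast}=-\bar\eta_{j}(x,t)$, up to a relabelling of the paired indices by $\pi$ (and its barred counterpart) and up to constants coming from $\xi_{i0},\bar\xi_{i0}$ that have to be carried along. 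Since, after the reduction, the scalar prefactor $1/(q_{i}+\bar q_{j})=1/(p_{i}+\bar p_{j})$ is itself invariant (or becomes $1/(p_{\pi(i)}+\bar p_{\bar\pi(j)})$ in the paired case), substituting into $a_{ij}=\frac{1}{p_{i}+\bar p_{j}}(e^{\xi_{i}+\bar\xi_{j}}+e^{\eta_{i}+\bar\eta_{j}})$ yields a relation of the form $a_{ij}^{\ast}(-x,t)=e^{-\xi_{i}-\eta_{i}}\,e^{-\bar\xi_{j}-\bar\eta_{j}}\,a_{\pi(i)\bar\pi(j)}(x,t)$. Taking determinants and using $\prod_{i}e^{-\xi_{i}-\eta_{i}}\prod_{j}e^{-\bar\xi_{j}-\bar\eta_{j}}=e^{-\sum_{j}(\xi_{j}+\bar\xi_{j})}\,e^{-\sum_{j}(\eta_{j}+\bar\eta_{j})}=1/(C_{1}C_{2})=1/C$ gives the first identity, $\tau_{0}(x,t)=C\,\tau_{0}^{\ast}(-x,t)$.

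For the second identity I would apply the same rescalings to the first $N$ rows and columns of the bordered determinant $\tau_{1}^{\ast}(-x,t)$, whose blocks are the conjugate-reflected blocks of $\tau_{1}$. Pulling $e^{-\xi_{i}-\eta_{i}}$ out of row $i$ sends the $(i,N{+}1)$ border entry $e^{\xi_{i}(-x,t)^{\ast}}$ to $e^{\eta_{i}}$, i.e.\ it converts the $\Phi$-border of $\tau_{1}$ into the $\Psi$-border of $\tau_{-1}$; dually, pulling $e^{-\bar\xi_{j}-\bar\eta_{j}}$ out of column $j$ converts the $-\bar\Psi$-border into the $-\bar\Phi$-border, and no separate scaling of the $(N{+}1)$-st row or column is needed. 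The factor accumulated is once more $\prod_{i}e^{-\xi_{i}-\eta_{i}}\prod_{j}e^{-\bar\xi_{j}-\bar\eta_{j}}=1/C$, so $\tau_{1}^{\ast}(-x,t)=\tau_{-1}(x,t)/C$, which is the second identity.

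The step I expect to be the main obstacle is the prefactor bookkeeping in the two relations above: one must follow every exponential pulled out of the $N$ rows, the $N$ columns and the two border lines, and check that they recombine into exactly $C_{1}C_{2}$ with no residual dependence on $x$, $t$ or on the auxiliary constants $\xi_{i0},\bar\xi_{i0}$; and in the conjugate-pair case one must verify that the pairing is realisable as one and the same permutation acting on rows and on columns, so that its signature squares to $1$, and that it is compatible between the $(p_{j},\eta_{j0})$ data and the $(\bar p_{j},\bar\eta_{j0})$ data. Everything beyond that is routine multilinear algebra.
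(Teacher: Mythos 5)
Your strategy is essentially the paper's own proof, just organized more systematically: the paper also pulls the exponential factors $e^{\xi_i+\eta_i}$ and $e^{\bar\xi_j+\bar\eta_j}$ out of the rows and columns of the Gram determinant and its bordered versions and compares the two resulting expressions, which is exactly your single entrywise identity $a_{ij}^{\ast}(-x,t)=e^{-\xi_i-\eta_i}e^{-\bar\xi_j-\bar\eta_j}a_{\pi(i)\bar\pi(j)}(x,t)$ followed by diagonal left/right scaling. Your treatment of the borders of $\tau_{\pm1}$ (the row scaling turning the $\Phi$-border into the $\Psi$-border and the column scaling turning $-\bar\Psi$ into $-\bar\Phi$) is a clean repackaging of the paper's explicit determinant manipulations, and the prefactor bookkeeping you worry about does close up to exactly $1/C$ as you claim.

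The one step that does not go through as written is the assertion that the permutation contributes $(\mathrm{sgn}\,\pi)^{2}=1$. The row permutation $\pi$ (acting on the unbarred data) and the column permutation $\bar\pi$ (acting on the barred data) are independent under the lemma's hypothesis, which pairs the $(p_j,\eta_{j0})$ \emph{or} the $(\bar p_j,\bar\eta_{j0})$; the determinant actually acquires the factor $\mathrm{sgn}(\pi)\,\mathrm{sgn}(\bar\pi)$, which equals $-1$ when exactly one of the two families contains an odd number of genuine conjugate pairs (this is precisely cases (iii) and (iv) of the paper's two-soliton list). You flag this as the main obstacle but then assume $\pi=\bar\pi$; that assumption is not granted. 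To be fair, the paper's own proof only writes out the all-real case and the case where both families are paired simultaneously (where the two signs cancel), so it silently skips the same configurations. The sign discrepancy is harmless for the final reduction, since $\tau_0$ and $\tau_{\pm1}$ all acquire the same factor $\mathrm{sgn}(\pi)\mathrm{sgn}(\bar\pi)C$ and only the ratios enter $q$ and $r$, but as a statement about the constant $C$ in the lemma your argument (and the paper's) needs this extra factor recorded. A secondary point: the collapse $\xi_i^{\ast}(-x,t)=-\xi_{\pi(i)}(x,t)$ also requires a compatible reality/pairing condition on the constants $\xi_{i0},\bar\xi_{j0}$, which neither you nor the lemma states explicitly; you at least note that these constants ``have to be carried along.''
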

\begin{proof}  First, we prove the case when $p_{i}$ and $\bar{p}_{i}$ are
all real, ${\eta }_{i0}$, $\bar{\eta}_{i0}$ are all purely imaginary. Under
this case,
\[
\xi _{i}=p_{i}x-{\mathrm{i}}p_{i}^{2}t\,,
\quad \eta _{i}=\mathrm{i}\theta_{i}\,,
\]
\[
\bar{\xi}_{j}=\bar{p}_{j}x+\mathrm{i}
\bar{p}_{j}^{2}t\,, \quad \bar{\eta }_{j}= \mathrm{i}\bar{\theta}_{j}\,,
\]
it then follows
\[
\xi _{i}^{\ast}(-x,t)=-\xi _{i}(x,t)\,, \quad  \bar{\xi}_{i}^{\ast }(-x,t)=-\bar{\xi}_{i}(x,t)\,.
\]
We note that $\tau _{0}(x,t)$ can be rewritten as
\begin{eqnarray}
\tau _{0}(x,t) &=&\prod_{j=1}^{N} e^{(\xi _{j}+\bar{\xi}_{j})}\left\vert \frac{1}{p_{i}+%
\bar{p}_{j}}(1+e^{-(p_{i}+\bar{p}_{j})x+\mathrm{i}(p_{i}^{2}-\bar{p%
}_{j}^{2})t+\mathrm{i}(\theta _{i}+\bar{\theta }_{j})})\right\vert\,, \nonumber \\
&=&\prod_{j=1}^{N} e^{\mathrm{i}(\theta _{j}+\bar{\theta }_{j})}\left\vert \frac{1%
}{p_{i}+\bar{p}_{j}}(1+e^{(p_{i}+\bar{p}_{j})x-\mathrm{i}(p_{i}^{2}-%
\bar{p}_{j}^{2})t-\mathrm{i}(\theta _{i}+\bar{\theta }%
_{j})})\right\vert\,,
\end{eqnarray}
which implies
\begin{equation}
\tau _{0}^{\ast }(-x,t)= \prod_{j=1}^{N} e^{-(\xi_{j}+\bar{\xi}_{j})} \left\vert \frac{1}{p_{i}+\bar{p}_{j}}%
(1+e^{(p_{i}+\bar{p}_{j})x-\mathrm{i}(p_{i}^{2}-\bar{p}_{j}^{2})t-%
\mathrm{i}(\theta _{i}+\bar{\theta }_{j})})\right\vert\,.
\end{equation}
Therefore $\tau _{0}(x,t)=C \tau^{\ast} _{0}(-x,t)$.
On the other hand,
\begin{eqnarray}
\tau _{1}(x,t) &=&\left\vert
\begin{array}{cc}
\frac{1}{p_{i}+\bar{p}_{j}}(e^{(p_{i}+\bar{p}_{j})x-\mathrm{i}%
(p_{i}^{2}-\bar{p}_{j}^{2})t}+e^{\mathrm{i}(\theta _{i}+\bar{%
\theta }_{j})}) & e^{p_{i}x-{ \mathrm{i}}p_{i}^{2}t} \\
-e^{\mathrm{i}\bar{\theta }_{j}} & 0%
\end{array}%
\right\vert\,, \nonumber \\
&=&\prod_{j=1}^{N} e^{\mathrm{i}(\theta _{j}+\bar{\theta }_{j})}\left\vert
\begin{array}{cc}
\frac{1}{p_{i}+\bar{p}_{j}}(1+e^{\xi _{i}+\bar{\xi}_{j}-\mathrm{i}(\theta _{i}+\bar{\theta}_j)}) & e^{\xi _{i} -\mathrm{i}\theta _{i}} \\
-1 & 0%
\end{array}%
\right\vert\,, \nonumber \\
&=&\prod_{j=1}^{N} e^{(\xi _{j}+\bar{\xi}_{j})}\left\vert
\begin{array}{cc}
\frac{1}{p_{i}+\bar{p}_{j}}(1+e^{-\xi _{i}-\bar{\xi}_{j}+\mathrm{i}(\theta _{i}+\bar{\theta }_{j})}) & 1 \\
-e^{-\bar{\xi}_{j}+\mathrm{i}%
\bar{\theta }_{j}} & 0%
\end{array}%
\right\vert\,,
\end{eqnarray}

\begin{eqnarray}
\tau _{-1}(x,t) &=&\left\vert
\begin{array}{cc}
\frac{1}{p_{i}+\bar{p}_{j}}(e^{(p_{i}+\bar{p}_{j})x-\mathrm{i}%
(p_{i}^{2}-\bar{p}_{j}^{2})t}+e^{\mathrm{i}(\theta _{i}+\bar{%
\theta }_{j})}) & e^{{\mathrm{i}}\theta _{i}} \\
-e^{\bar{p}_{j}x+{\mathrm{i}}\bar{p}_{j}^{2}t} & 0%
\end{array}%
\right\vert\,, \nonumber \\
&=&\prod_{j=1}^{N} e^{(\xi _{j}+\bar{\xi}_{j})}\left\vert
\begin{array}{cc}
\frac{1}{p_{i}+\bar{p}_{j}}(1+e^{-\xi _{i}-\bar{\xi}_{j}+\mathrm{i}(\theta _{i}+\bar{\theta }_{j})}) & e^{-\xi _{i} +\mathrm{i}\theta _{i}} \\
-1 & 0%
\end{array}%
\right\vert\,.
\end{eqnarray}
Therefore
\begin{eqnarray}
\tau^{\ast}_{1}(-x,t)
&=&\prod_{j=1}^{N} e^{(\xi _{j}+\bar{\xi}_{j})^{\ast}(-x,t) }\left\vert
\begin{array}{cc}
\frac{1}{p_{i}+\bar{p}_{j}}(1+e^{\xi _{i}+\bar{\xi}_{j}-\mathrm{i}(\theta _{i}+\bar{\theta }_{j})}) & 1 \\
-e^{\bar{\xi}_{j}-\mathrm{i}%
\bar{\theta }_{j}} & 0%
\end{array}%
\right\vert\,, \nonumber \\
&=& \prod_{j=1}^{N} e^{-\mathrm{i}(\theta _{j}+\bar{\theta }_{j})}\left\vert
\begin{array}{cc}
\frac{1}{p_{i}+\bar{p}_{j}}(1+e^{-\xi _{i}-\bar{\xi}_{j}+\mathrm{i}(\theta _{i}+\bar{\theta}_j)}) & e^{-\xi_i+\mathrm{i}\theta _{i}} \\
-1 & 0%
\end{array}%
\right\vert\,.
\end{eqnarray}
Obviously $\tau_{-1}(x,t)=C \tau^{\ast} _{1}(-x,t)$.
Next we prove the case where there are pairs of wave numbers $p_{k},p_{k^{\prime}},%
\bar{p}_{k},\bar{p}_{k^{\prime }}$ such that $p_{k}=p_{k^{\prime }}^{\ast},%
\bar{p}_{k}=\bar{p}_{k^{\prime }}^{\ast }$. Moreover, ${\eta }%
_{k^{\prime }0}=-{\eta }_{k0}^{\ast }$, $\bar{\eta}_{k^{\prime }0}=-\bar{\eta}_{k0}^{\ast }$.
Note that $p_{k},p_{k^{\prime}}$ or $\bar{p}_{k},\bar{p}_{k^{\prime }}$ being real; ${\eta}_{k^{\prime }0}$, ${\eta }_{k0}$ or $\bar{\eta}_{k^{\prime }0}$, $\bar{\eta}_{k0}$ being purely imaginary is simply a special case. Under this case, we
also have $\xi _{k}^{\ast }(-x,t)=-\xi _{k'}(x,t),\bar{\xi}_{k}^{\ast
}(-x,t)=-\bar{\xi}_{k'}(x,t)$, thus
\begin{eqnarray}
\fl \tau _{0}(x,t) &=&\left\vert
\begin{array}{cc}
\frac{1}{p_{k}+\bar{p}_{k^{\prime }}}(e^{\xi _{k}+\bar{\xi}_{k^{\prime
}}}+e^{{\eta }_{k0}+\bar{\eta}_{k^{\prime }0}}) & \cdots  \\
\cdots  & \frac{1}{p_{k^{\prime }}+\bar{p}_{k}}(e^{\xi _{k^{\prime }}+\bar{%
\xi}_{k}}+e^{{\eta }_{k^{\prime }0}+\bar{\eta}_{k0}})%
\end{array}%
\right\vert  \nonumber \\
\fl &=&C_{2}\left\vert
\begin{array}{cc}
\frac{1}{p_{k}+\bar{p}_{k^{\prime }}}(1+e^{\xi _{k}+\bar{\xi}_{k^{\prime }}-{%
\eta }_{k0}-\bar{\eta}_{k^{\prime }0}}) & \cdots  \\
\cdots  & \frac{1}{p_{k^{\prime }}+\bar{p}_{k}}(1+e^{\xi _{k^{\prime }}+\bar{%
\xi}_{k}-{\eta }_{k^{\prime }0}+\bar{\eta}_{k0}}%
\end{array}%
\right\vert \nonumber \\
\fl &=&C_{1}\left\vert
\begin{array}{cc}
\frac{1}{p_{k}+\bar{p}_{k^{\prime }}}(1+e^{-(\xi _{k}+\bar{\xi}_{k^{\prime
}}-{\eta }_{k0}-\bar{\eta}_{k^{\prime }0})}) & \cdots  \\
\cdots  & \frac{1}{p_{k^{\prime }}+\bar{p}_{k}}(1+e^{-(\xi _{k^{\prime }}+%
\bar{\xi}_{k}-{\eta }_{k^{\prime }0}-\bar{\eta}_{k0})})%
\end{array}%
\right\vert \nonumber  \\
\fl &=&C_{1}\left\vert
\begin{array}{cc}
\frac{1}{p_{k^{\prime }}+\bar{p}_{k}}(1+e^{-(\xi _{k^{\prime }}+\bar{\xi}%
_{k}-{\eta }_{k^{\prime }0}-\bar{\eta}_{k0})}) & \cdots  \\
\cdots  & \frac{1}{p_{k}+\bar{p}_{k^{\prime }}}(1+e^{-(\xi _{k}+\bar{\xi}%
_{k^{\prime }}-{\eta }_{k0}-\bar{\eta}_{k^{\prime }0})})%
\end{array}%
\right\vert\,.
\end{eqnarray}%
Therefore
\begin{equation}
\fl \tau _{0}^{\ast }(-x,t)=C_{1}^{-1}\left\vert
\begin{array}{cc}
\frac{1}{p_{k}+\bar{p}_{k^{\prime }}}(1+e^{\xi _{k}+\bar{\xi}_{k^{\prime }}-{%
\eta }_{k0}-\bar{\eta}_{k0})}) & \cdots  \\
\cdots  & \frac{1}{p_{k^{\prime }}+\bar{p}_{k}}(1+e^{-(\xi _{k}+\bar{\xi}%
_{k^{\prime }}-{\eta }_{k0}-\bar{\eta}_{k^{\prime }0})})%
\end{array}%
\right\vert \,.
\end{equation}
Obviously $\tau_{0}(x,t)=C \tau^{\ast} _{0}(-x,t)$. On the other hand,
\begin{eqnarray*}
\fl \tau _{1}(x,t) &=&\left\vert
\begin{array}{ccc}
\frac{1}{p_{k}+\bar{p}_{k^{\prime }}}(e^{\xi _{k}+\bar{\xi}_{k^{\prime
}}}+e^{{\eta }_{k0}+\bar{\eta}_{k^{\prime }0}}) & \cdots  & e^{\xi _{k}} \\
\cdots  & \frac{1}{p_{k^{\prime }}+\bar{p}_{k}}(e^{\xi _{k^{\prime }}+\bar{%
\xi}_{k}}+e^{{\eta }_{k^{\prime }0}+\bar{\eta}_{k0}}) & e^{\xi _{k^{\prime
}}} \\
-e^{\bar{\eta}_{k^{\prime }0}} & -e^{\bar{\eta}_{k0}} & 0%
\end{array}%
\right\vert \nonumber \\
\fl &=&C_{2}\left\vert
\begin{array}{ccc}
\frac{1}{p_{k}+\bar{p}_{k^{\prime }}}(1+e^{\xi _{k}+\bar{\xi}_{k^{\prime }}-{%
\eta }_{k0}-\bar{\eta}_{k^{\prime }0}}) & \cdots  & e^{\xi _{k}-{\eta }_{k0}}
\\
\cdots  & \frac{1}{p_{k^{\prime }}+\bar{p}_{k}}(1+e^{\xi _{k^{\prime }}+\bar{%
\xi}_{k}-{\eta }_{k^{\prime }0}-\bar{\eta}_{k0}}) & e^{\xi _{k^{\prime}}-{%
\eta }_{k^{\prime }0}} \\
-1 & -1 & 0%
\end{array}%
\right\vert \nonumber \\
\fl &=&C_{1}\left\vert
\begin{array}{ccc}
\frac{1}{p_{k}+\bar{p}_{k^{\prime }}}(1+e^{-(\xi _{k}+\bar{\xi}_{k^{\prime
}}-{\eta }_{k0}-\bar{\eta}_{k^{\prime }0})}) & \cdots  & 1 \\
\cdots  & \frac{1}{p_{k^{\prime }}+\bar{p}_{k}}(1+e^{-(\xi _{k^{\prime }}+%
\bar{\xi}_{k}-{\eta }_{k^{\prime }0}-\bar{\eta}_{k0})}) & 1 \\
-e^{\bar{\eta}_{k^{\prime }0}} & -e^{\bar{\eta}_{k0}} & 0%
\end{array}%
\right\vert \nonumber \\
\fl &=&C_{1}\left\vert
\begin{array}{ccc}
\frac{1}{p_{k^{\prime }}+\bar{p}_{k}}(1+e^{-(\xi _{k^{\prime }}+\bar{\xi}%
_{k}-{\eta }_{k^{\prime }0}-\bar{\eta}_{k0})}) & \cdots  & 1 \\
\cdots  & \frac{1}{p_{k}+\bar{p}_{k^{\prime }}}(1+e^{-(\xi _{k}+\bar{\xi}%
_{k^{\prime }}-{\eta }_{k0}-\bar{\eta}_{k^{\prime }0})}) & 1 \\
-e^{\bar{\eta}_{k0}} & -e^{\bar{\eta}_{k^{\prime }0}} & 0%
\end{array}%
\right\vert\,,
\end{eqnarray*}

\begin{eqnarray*}
\fl \tau _{-1}(x,t) &=&\left\vert
\begin{array}{ccc}
\frac{1}{p_{k}+\bar{p}_{k^{\prime }}}(e^{\xi _{k}+\bar{\xi}_{k^{\prime
}}}+e^{{\eta }_{k0}+\bar{\eta}_{k^{\prime }0}}) & \cdots  & e^{{\eta }_{k0}}
\\
\cdots  & \frac{1}{p_{k^{\prime }}+\bar{p}_{k}}(e^{\xi _{k^{\prime }}+\bar{%
\xi}_{k}}+e^{{\eta }_{k^{\prime }0}+\bar{\eta}_{k0}}) & e^{{\eta }%
_{k^{\prime }0}} \\
-e^{\bar{\xi}_{k^{\prime }}} & -e^{\bar{\xi}_{k}} & 0%
\end{array}%
\right\vert  \\
\fl &=&C_{1}\left\vert
\begin{array}{ccc}
\frac{1}{p_{k^{\prime }}+\bar{p}_{k}}(1+e^{-(\xi _{k^{\prime }}+\bar{\xi}%
_{k}-{\eta }_{k^{\prime }0}-\bar{\eta}_{k0})}) & \cdots  & e^{-\xi
_{k^{\prime }}+{\eta }_{k^{\prime }0}} \\
\cdots  & \frac{1}{p_{k}+\bar{p}_{k^{\prime}}}(1+e^{-(\xi _{k}+\bar{\xi}%
_{k^{\prime }}-{\eta }_{k0}-\bar{\eta}_{k^{\prime }0})}) & e^{-\xi _{k}+{%
\eta }_{k0}} \\
-1 & -1 & 0%
\end{array}%
\right\vert\,.
\end{eqnarray*}
Thus
\begin{equation*}
\fl \tau _{-1}^{\ast }(-x,t)=C_{1}^{-1}\left\vert
\begin{array}{ccc}
\frac{1}{p_{k}+\bar{p}_{k^{\prime }}}(1+e^{\xi _{k}+\bar{\xi}_{k^{\prime }}-{%
\eta }_{k0}-\bar{\eta}_{k^{\prime }0}}) & \cdots  & e^{\xi _{k}-{\eta }_{k0}}
\\
\cdots  & \frac{1}{p_{k^{\prime }}+\bar{p}_{k}}(1+e^{\xi _{k^{\prime }}+\bar{%
\xi}_{k}-{\eta }_{k^{\prime }0}-\bar{\eta}_{k0}}) & e^{\xi _{k^{\prime
}}-{\eta }_{k^{\prime }0}} \\
-1 & -1 & 0%
\end{array}%
\right\vert\,.
\end{equation*}
Consequently, $\tau _{-1}(-x,t)=C \tau _{1}^{\ast }(x,t)$.
\end{proof}
Following above Lemma, if we define $\tau_0(x,t)=\sqrt{C}f(x,t)$, $\tau_{1}(x,t)=\sqrt{C}g(x,t)$, $\tau_{-1}(x,t)=\sqrt{C}\bar{g}(x,t)$, then we have $f(x,t)=f^{\ast}(-x,t)$, $\bar{g}(x,t)=g^{\ast}(-x,t)$, which lead to the following bilinear equations
\begin{equation}
\label{bilinear_bright2}
\left\{
\begin{array}{l}
\displaystyle(\mathrm{i} D_{t}-D_{x}^{2})g(x,t) \cdot f(x,t)=0\,, \\[5pt]
\displaystyle D^2_{x}f(x,t) \cdot f(x,t) -2g(x,t) g^{\ast} (-x,t)=0\,.%
\end{array}%
\right.
\end{equation}
Moreover, if we define
\begin{equation}
\label{NLS_brightGram}
q(x,t)=\frac{g(x,t)}{f(x,t)}, \quad r(x,t)= \frac{-\bar{g}(x,t)}{f(x,t)}\,,
\end{equation}
we then have
\begin{equation}
r(x,t)=-q^{\ast }(-x,t).
\end{equation}
In summary, we have the general $N$-bright soliton solution (\ref{NLS_brightGram}) to the nonlocal NLS equation
(\ref{NLS_PT}) with $\sigma=-1$.
\begin{equation}
f(x,t)= \frac{1}{\sqrt{C}}\left\vert \frac{1}{p_{i}+\bar{p}_{j}}(e^{(p_{i}+\bar{p}_{j})x-\mathrm{i}%
(p_{i}^{2}-\bar{p}_{j}^{2})t}+e^{(\eta _{i}+\bar{%
\eta }_{j})})\right\vert_{N \times N},,
\end{equation}
\begin{equation}
g(x,t) =\frac{1}{\sqrt{C}} \left\vert
\begin{array}{cc}
\frac{1}{p_{i}+\bar{p}_{j}}(e^{(p_{i}+\bar{p}_{j})x-\mathrm{i}%
(p_{i}^{2}-\bar{p}_{j}^{2})t}+e^{(\eta _{i}+\bar{%
\eta }_{j})}) & e^{p_{i}x-{ \mathrm{i}}p_{i}^{2}t} \\
-e^{\bar{\eta }_{j}} & 0%
\end{array}%
\right\vert_{(N+1) \times (N+1)}\,.
\end{equation}
In what follows, we list one- and two-soliton solutions: \\
{\bf {One-soliton solution:}}
\begin{equation}
f(x,t)=\frac{1}{%
p_{1}+\bar{p}_{1}} e^{-\frac 12 ({\xi _{1}+\bar{\xi}_{1}}-\mathrm{i} \theta _{1}-\mathrm{i}\bar{\theta }_{1})} \left( 1+e^{(p_{1}+\bar{p}_{1})x-\mathrm{i}(p_{1}^{2}-%
\bar{p}_{1}^{2})t-\mathrm{i}(\theta _{1}+\bar{\theta }%
_{1})}\right)
\end{equation}

\begin{equation}
g(x,t)=e^{-\frac 12 ({\xi _{1}+\bar{\xi}_{1}}+ \mathrm{i} \theta _{1}+ \mathrm{i}\bar{\theta }_{1})} e^{p_{1}x-\mathrm{i} p_{1}^{2}t+\mathrm{i}\bar{\theta }_{1}}\,.
\end{equation}
So
\begin{equation}
q(x,t)=\frac{(p_{1}+\bar{p}_{1})e^{p_{1}x- \mathrm{i} p_{1}^{2}t-\mathrm{i}{\theta}_{1}}}
{1+e^{(p_{1}+\bar{p}_{1})x-\mathrm{i}(p_{1}^{2}-%
\bar{p}_{1}^{2})t-\mathrm{i}(\theta _{1}+\bar{\theta }%
_{1})}}\,.
\end{equation}
 {if we let $p_1=-2\bar{\eta}$, $\bar{p}_1=-2 \eta$, $\theta_1=-\bar{\theta}+\pi$, $\bar{\theta}_1=-\theta$, then above solution exactly recovers the one-soliton solution found in \cite{AblowitzMusslimani3}, also mentioned in the introduction.}\\
{\bf {Two-soliton solution:}}
\begin{eqnarray}
\fl f &=&\left\vert
\begin{array}{cc}
\frac{1}{p_{1}+\bar{p}_{1}}(e^{\xi _{1}+\bar{\xi}_{1}}+e^{{\eta }_{10}+\bar{%
\eta}_{10}}) & \frac{1}{p_{1}+\bar{p}_{2}}(e^{\xi _{1}+\bar{\xi}_{2}}+e^{{%
\eta }_{10}+\bar{\eta}_{20}}) \\
\frac{1}{p_{2}+\bar{p}_{1}}(e^{\xi _{2}+\bar{\xi}_{1}}+e^{{\eta }_{20}+\bar{%
\eta}_{10}}) & \frac{1}{p_{2}+\bar{p}_{2}}(e^{\xi _{2}+\bar{\xi}_{2}}+e^{{%
\eta }_{20}+\bar{\eta}_{20}})%
\end{array}%
\right\vert \nonumber  \\
\fl &=&D \left( 1+e^{\xi _{1}+\bar{\xi}_{1}+\xi _{2}+\bar{\xi}_{2}-{\eta }_{10}-%
\bar{\eta}_{10}-{\eta }_{20}-\bar{\eta}_{20}}+\frac{(p_{1}+\bar{p}%
_{2})(p_{2}+\bar{p}_{1})}{(p_{1}-p_{2})(\bar{p}_{1}-\bar{p}_{2})}\left(
e^{\xi _{1}+\bar{\xi}_{1}-{\eta }_{10}-\bar{\eta}_{10}}+e^{\xi _{2}+\bar{\xi}%
_{2}-{\eta }_{20}-\bar{\eta}_{20}}\right) \right. \nonumber \\
\fl && \left. +\frac{(p_{1}+\bar{p}_{1})(p_{2}+\bar{p}_{2})}{(p_{1}-p_{2})(\bar{p}_{1}-%
\bar{p}_{2})}\left( e^{\xi _{1}+\bar{\xi}_{2}-{\eta }_{10}-\bar{\eta}%
_{20}}+e^{\xi _{2}+\bar{\xi}_{1}-{\eta }_{20}-\bar{\eta}_{10}}\right) \right)\,,
\end{eqnarray}

\begin{eqnarray}
\fl g &=&\left\vert
\begin{array}{ccc}
\frac{1}{p_{1}+\bar{p}_{1}}(e^{\xi _{1}+\bar{\xi}_{1}}+e^{{\eta }_{10}+\bar{%
\eta}_{10}}) & \frac{1}{p_{1}+\bar{p}_{2}}(e^{\xi _{1}+\bar{\xi}_{2}}+e^{{%
\eta }_{10}+\bar{\eta}_{20}}) & e^{\xi _{1}} \\
\frac{1}{p_{2}+\bar{p}_{1}}(e^{\xi _{2}+\bar{\xi}_{1}}+e^{{\eta }_{20}+\bar{%
\eta}_{10}}) & \frac{1}{p_{2}+\bar{p}_{2}}(e^{\xi _{2}+\bar{\xi}_{2}}+e^{{%
\eta }_{20}+\bar{\eta}_{20}}) & e^{\xi _{2}} \\
-e^{\bar{\eta}_{10}} & -e^{\bar{\eta}_{20}} & 0%
\end{array}%
\right\vert  \nonumber \\
\fl &=&D\left( \frac{(p_{1}+\bar{p}_{1})(p_{2}+\bar{p}_{1})}{\bar{p}_{1}-\bar{p}%
_{2}}e^{\xi _{1}+\xi _{2}+\bar{\xi}_{2}-{\eta }_{10}-{\eta }_{20}-\bar{\eta}%
_{20}}+\frac{(p_{1}+\bar{p}_{1})(p_{1}+\bar{p}_{2})}{p_{1}-p_{2}}e^{\xi _{1}-%
{\eta }_{10}}\right) \nonumber \\
\fl &&-\left( \frac{(p_{1}+\bar{p}_{2})(p_{2}+\bar{p}_{2})}{\bar{p}_{1}-\bar{p}%
_{2}}e^{\xi _{1}+\xi _{2}+\bar{\xi}_{2}-{\eta }_{10}-{\eta }_{20}-\bar{\eta}%
_{20}}+\frac{(p_{2}+\bar{p}_{1})(p_{2}+\bar{p}_{2})}{p_{1}-p_{2}}e^{\xi _{2}-%
{\eta }_{20}}\right)\,,
\end{eqnarray}
where
\begin{equation*}
D=\sqrt{\frac{C_2}{C_1}}\frac{(p_{1}-p_{2})(\bar{p}_{1}-\bar{p}_{2})}{(p_{1}+\bar{p}_{1})(p_{1}+%
\bar{p}_{2})(p_{2}+\bar{p}_{1})(p_{2}+\bar{p}_{2})}\,.
\end{equation*}
Two soliton solution could have the following cases:
\begin{enumerate}
  \item $p_{1},p_{2},\bar{p}_{1},\bar{p}_{2}$ are all real and ${\eta }_{10},{%
\eta }_{20},\bar{\eta}_{10},\bar{\eta}_{20}$ are all purely imaginary.
  \item $p_{1},p_{2},\bar{p}_{1},\bar{p}_{2}$, ${\eta }_{10},{\eta }_{20},\bar{%
\eta}_{10},\bar{\eta}_{20}$ are all complex, $p_{1}=p_{2}^{\ast },\bar{p}%
_{1}=\bar{p}_{2}^{\ast },{\eta }_{10}=-{\eta }_{20}^{\ast },\bar{\eta}_{10}=%
\bar{\eta}_{20}^{\ast }$.
  \item $p_{1},p_{2}$, ${\eta }_{10},{\eta }_{20}$ are complex number, $%
p_{1}=p_{2}^{\ast },{\eta }_{10}=-{\eta }_{20}^{\ast };$ $\bar{p}_{1},\bar{p}%
_{2}$ are real and $\bar{\eta}_{10},\bar{\eta}_{20}$ are all purely imaginary.
  \item $\bar{p}_{1},\bar{p}_{2}$, $\bar{\eta}_{10},\bar{\eta}_{20}$ are complex
number, $\bar{p}_{1}=\bar{p}_{2}^{\ast },\bar{\eta}_{10}=\bar{\eta}%
_{20}^{\ast };$ $p_{1},p_{2}$ are real and ${\eta }_{10},{\eta }_{20}$ are
all purely imaginary.
\end{enumerate}
It is noted that all above four cases regarding the bright soliton solution to the nonlocal NLS equation have been recognized and their dynamics have been detailed discussed in \cite{JYangNonlocalNLS}.
\subsection{General bright soliton solution expressed by double Wronskian determinant}
Alternatively, we can also present the general bright soliton solution to the nonlocal NLS equation (\ref{NLS_PT}) in terms of the double Wronskian determinant. To this end, we start with
the tau-functions for two-component KP hierarchy expressed in double Wronskian determinant
\begin{equation}
\tau _{N,M}=\left\vert
\begin{array}{cccccc}
\phi _{1}^{(0)} & \cdots & \phi _{1}^{(N-1)} & \psi _{1}^{(0)} & \cdots &
\psi _{1}^{(M-1)} \\
\phi _{2}^{(0)} & \cdots & \phi _{2}^{(N-1)} & \psi _{2}^{(0)} & \cdots &
\psi _{2}^{(M-1)} \\
\vdots & \vdots & \vdots & \vdots & \vdots & \vdots \\
\phi _{N+M}^{(0)} & \cdots & \phi _{N+M}^{(N-1)} & \psi _{N+M}^{(0)} & \cdots
& \psi _{N+M}^{(M-1)}%
\end{array}%
\right\vert _{(N+M)\times (N+M)}\,,
\end{equation}
here $\phi _{i}^{(n)}$ and $\psi _{i}^{(n)}$ take the form
\[
\phi _{i}^{(n)}=p_{i}^{n}e^{\xi _{i}}, \quad \psi _{i}^{(n)}=q_{i}^{n}e^{\eta
_{i}}\,,
\]%
with
\begin{equation*}
\xi _{i}=p_{i}x_{1}+p_{i}^{2}x_{2}+\xi _{0i}+\cdots\,, \quad \eta _{i}=q_{i}y_{1}+\eta _{0i}+\cdots\,.
\end{equation*}
The above tau functions satisfy the following bilinear equations
\begin{equation}
\left\{
\begin{array}{l}
\displaystyle(D_{x_{2}}-D_{x_{1}}^{2})\tau _{N+1,N-1}\cdot \tau _{N,N}=0, \\%
[5pt]
\displaystyle(D_{x_{2}}-D_{x_{1}}^{2})\tau _{N,N}\cdot \tau
_{N-1,N+1}=0,\quad \\
\displaystyle D_{x_{1}}D_{y_{1}}\tau _{N,N}\cdot \tau _{N,N}-2\tau
_{N+1,N-1}\tau _{N-1,N+1}=0\,.%
\end{array}%
\right.
\label{bilinear_doubleWr}
\end{equation}
The proof of above equations can be done by using the determinant
technique \cite{HirotaBook},which is omitted here. In what follows, we will
perform reduction procedure from bilinear equations (\ref{bilinear_doubleWr}) to the bilinear equations (\ref{NLSPT_bilinear}). For the sake of convenience,
we take the following abbraevation
\[
\tau _{N,M}\,=\left\vert 0,\cdots ,N-1,0^{^{\prime }},\cdots
,(M-1)^{^{\prime }}\right\vert\,,
\]
First, we consider the dimension reduction. Imposing the condition $N=M$ and  $%
q_{i}=p_{i}$ $(i=1,\cdots ,2N)$, we then have
\begin{equation}
(\partial _{x_{1}}+\partial _{y_{1}})\tau _{N,M}=\left(
\prod_{j=1}^{2N}p_{j}\right) \tau _{N,M}\,.
\end{equation}%
Under this reduction, $y_{1}$ becomes a dummy variable, which can be taken
as zero. Applying variable transformations
\begin{equation}
x_{1}=x,\quad x_{2}=-\mathrm{i}t\,,  \label{var-trf}
\end{equation}%
which implies
\begin{equation}
\partial _{x_{1}}=\partial _{x},\quad \partial _{x_{2}}=\mathrm{i}\partial
_{t}\,.  \label{var-trf2}
\end{equation}
Then the bilinear equations (\ref{bilinear_doubleWr}) can be rewritten as
\begin{equation}
\left\{
\begin{array}{l}
\displaystyle(\mathrm{i}D_{t}-D_{x}^{2})\tau _{N+1,N-1}\cdot \tau _{N,N}=0,
\\[5pt]
\displaystyle(\mathrm{i}D_{t}-D_{x}^{2})\tau _{N,N}\cdot \tau
_{N-1,N+1}=0,\quad  \\
\displaystyle D_{x}^{2}\tau _{N,N}\cdot \tau _{N,N}+2\tau _{N+1,N-1}\tau
_{N-1,N+1}=0\,,%
\end{array}%
\right.
\end{equation}

\begin{lemma}
Assume $x_{1}=x$ is real and $x_{2}=-\mathrm{i}t$ is purely imaginary variable. Suppose there are $K$ pairs of wave
numbers ($p_{k}$, ${\eta }_{k0}$) and ($\bar{p}_{k}$, $\bar{\eta}_{k0}$) for $k=1, \cdots, K$ such that
 $p_{k}=p^*_{k'}$, ${\eta }_{k0}={\eta}^*_{k'0}$ or $\bar{p}_{k}=\bar{p}^*_{k'}$, $\bar{\eta}_{k0}=\bar{\eta}^*_{k'0}$; and the rest of wave numbers $p_{j}$ and $\bar{p}_{j}$ are real, ${\eta }_{j0}$, $\bar{\eta}_{j0}$ are purely imaginary. Then
\begin{equation}
\label{tau_relationDW1}
\tau _{N,N}(x,t) = C' (-1)^{N+K} \tau^{\ast} _{N,N}(-x,t)\,,
\end{equation}
\begin{equation}
\label{tau_relationDW2}
\tau _{N-1,N+1}(x,t) = C' (-1)^{N+K+1} \tau^{\ast} _{N+1,N-1}(-x,t)\,,
\end{equation}
where $C'=\prod^{N}_{i=1} e^{\xi_{i}(x,t)-\eta _{i}}$.
\end{lemma}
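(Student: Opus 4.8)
The plan is to mimic the structure of the proof of the first Lemma, now in the double Wronskian setting, by tracking how the column structure of the determinant transforms under the substitution $x\mapsto -x$ together with complex conjugation. The essential observation is that the reduction constraints $N=M$ and $q_i=p_i$ (together with the variable transformation \eqref{var-trf}) make $\xi_i^{\ast}(-x,t)=-\xi_i(x,t)+(\text{constant})$ in the ``all real/purely imaginary'' subcase, and $\xi_k^{\ast}(-x,t)=-\xi_{k'}(x,t)+(\text{constant})$ in the ``paired'' subcase, exactly as in the Gramian proof. Since $\phi_i^{(n)}=p_i^{\,n}e^{\xi_i}$ and (after setting $q_i=p_i$) $\psi_i^{(n)}=p_i^{\,n}e^{\eta_i}$ with $\eta_i$ reducing to a pure constant $e^{\eta_{i0}}$ once $y_1$ is dropped, taking the complex conjugate and replacing $x$ by $-x$ sends the entry $p_i^{\,n}e^{\xi_i(x,t)}$ to $p_i^{\ast\,n}e^{-\xi_i(x,t)}\cdot(\text{const})$, i.e.\ up to an overall row factor it reproduces a column of the same Wronskian form but with $p_i\mapsto p_i^{\ast}$. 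I would therefore first record these conjugation identities for $\xi_i$ and $\eta_i$ precisely, isolating the scalar prefactors that will accumulate into $C'$.

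Next I would carry out the row/column bookkeeping. Factor $e^{\xi_i(x,t)}$ out of the $i$-th row of $\tau_{N,N}(x,t)$ (this produces $C'=\prod_{i=1}^{N}e^{\xi_i(x,t)-\eta_i}$ once the $\psi$-columns are also normalized by $e^{-\eta_i}$, matching the statement); then in $\tau^{\ast}_{N,N}(-x,t)$ the same factoring produces $C'^{-1}$ and leaves a determinant whose $\phi$-columns are $p_i^{\ast\,n}e^{-\xi_i(x,t)}$-type and whose $\psi$-columns are constants. In the all-real/purely-imaginary subcase $p_i^{\ast}=p_i$, so the two normalized determinants agree entry-for-entry up to the column scalars, and one only needs to chase the sign: conjugating the Wronskian with $n$ running $0,\dots,N-1$ in each block contributes nothing, but the reflection $x\mapsto -x$ effectively replaces $e^{\xi_i}$ by $e^{-\xi_i}$, which after re-expanding is the \emph{same} set of functions up to the already-extracted scalars — so the sign is governed purely by the permutation needed to match the paired columns. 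In the $K$-pairs subcase, conjugation swaps column (or row) $k$ with $k'$ for each of the $K$ pairs, giving a factor $(-1)^{K}$; combined with an overall $(-1)^{N}$ coming from the interchange of the two $\phi$- and $\psi$-blocks (the shift $\tau_{N,N}\to\tau_{N,N}$ is parity-even here, but the analogous manipulation turning $\tau_{N-1,N+1}$ into $\tau_{N+1,N-1}$ flips the block sizes and costs the extra sign), one reads off the stated $(-1)^{N+K}$ and $(-1)^{N+K+1}$.

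For the second identity \eqref{tau_relationDW2} I would argue in parallel: $\tau_{N-1,N+1}$ has $N-1$ $\phi$-columns $p_i^{\,n}e^{\xi_i}$ ($n=0,\dots,N-2$) and $N+1$ $\psi$-columns $p_i^{\,m}e^{\eta_i}$ ($m=0,\dots,N$); under $x\mapsto-x$ and conjugation the $\phi$-columns turn into $p_i^{\ast\,n}e^{-\xi_i}$-type columns and the $\psi$-columns stay (conjugated) constants, and after extracting the row factor $C'$ one recognizes precisely the column content of $\tau_{N+1,N-1}(-x,t)^{\ast}$ read with the two blocks interchanged. Counting the block interchange ($N+1\leftrightarrow N-1$ columns) against the $K$ pair-swaps and the residual parity of the reflected Wronskian gives the opposite overall sign, hence the $(-1)^{N+K+1}$.

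The main obstacle I anticipate is the sign/permutation accounting: keeping straight (i) the sign picked up when conjugation permutes the $K$ paired columns, (ii) the sign from reversing the order of the derivative index $n$ within each Wronskian block (which is where $x\mapsto-x$ bites, since $\partial_x\mapsto-\partial_x$ under reflection), and (iii) the block-interchange sign that distinguishes \eqref{tau_relationDW1} from \eqref{tau_relationDW2}. A clean way to control (ii) is to note that reflecting $x$ multiplies the $n$-th column of a Wronskian block of width $w$ by $(-1)^{n}$ after one also conjugates, so the block contributes $(-1)^{0+1+\cdots+(w-1)}=(-1)^{w(w-1)/2}$; evaluating this for $w=N$ versus $w=N\pm1$ and combining with the $(-1)^{K}$ from the pairings is exactly what produces the parity $N+K$ (mod $2$) claimed. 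I would verify the bookkeeping on the $N=1$ (and $N=2$) case before writing the general argument, since the small cases pin down all three signs unambiguously.
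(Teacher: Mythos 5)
Your overall strategy is the same as the paper's: use $\xi_i^{\ast}(-x,t)=-\xi_i(x,t)$ (resp.\ $\xi_k^{\ast}(-x,t)=-\xi_{k'}(x,t)$ for the $K$ pairs), divide the $i$-th row by $e^{\xi_i-\eta_i}$ so that the exponential content of the $\phi$-block and the $\psi$-block is exchanged, then interchange the two column blocks and recognize the conjugated, reflected tau function, with each of the $K$ pairs contributing a row transposition. Up to that point your plan is sound.

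The sign bookkeeping in your item (ii), however, is wrong and would spoil the result. The columns of the double Wronskian are the algebraic objects $\phi_i^{(n)}=p_i^{\,n}e^{\xi_i}$, and $\tau^{\ast}_{N,N}(-x,t)$ is by definition built from $\bigl(p_i^{\,n}e^{\xi_i(-x,t)}\bigr)^{\ast}=p_i^{\,n}e^{-\xi_i(x,t)}$ (for real $p_i$): the power of $p_i$ is attached before the reflection, so no factor $(-1)^n$ ever appears, and there is no block contribution $(-1)^{w(w-1)/2}$. Indeed $N^2+N(N-1)/2$ does not have parity $N$ (take $N=2$, $K=0$: your accounting gives $-1$ while the lemma requires $+1$), so the claim that these factors ``combine to give the parity $N+K$'' fails. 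The correct signs come \emph{only} from the column-block interchange and the $K$ row swaps: for $\tau_{N,N}$ the two width-$N$ blocks must be interchanged, giving $(-1)^{N^2}=(-1)^N$ (so your remark that this case is ``parity-even'' and needs no interchange is also incorrect), while for $\tau_{N-1,N+1}\to\tau_{N+1,N-1}$ the interchange of blocks of widths $N-1$ and $N+1$ gives $(-1)^{N^2-1}=(-1)^{N+1}$; together with $(-1)^K$ these are exactly the exponents $N+K$ and $N+K+1$ in the statement. (A smaller point: ``normalizing the $\psi$-columns by $e^{-\eta_i}$'' is not a column operation, since $\eta_i$ varies with the row; the single row operation of dividing row $i$ by the appropriate product of the two exponentials, as in the paper, is what simultaneously produces $C'$ and swaps the roles of the two blocks.)
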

\begin{proof}
We only give the proof when $p_{j}$ and $\bar{p}_{j}$ are real and ${\eta }_{j0}$, $\bar{\eta}_{j0}$ are purely imaginary. Under this case,
 $\xi _{i}(x,t)=p_{i}x-\mathrm{i}p_{i}^{2}t=-\xi _{i}^{\ast}(-x,t),$ then it can be shown
\begin{eqnarray}
\fl \tau _{N-1,N+1}(x,t) &=& \left\vert 0,\cdots ,N-2,0^{^{\prime }}\cdots
,,N^{^{\prime }}\right\vert \,, \nonumber \\
&=&\left\vert e^{\xi _{i}(x,t)},\cdots ,p_{i}^{N-2}e^{^{\xi
_{i}(x,t)}},e^{-\mathrm{i}\theta _{i}},\cdots ,p_{i}^{N}e^{-\mathrm{i}\theta
_{i}}\right\vert \,, \nonumber \\
\fl &=& \prod^{N}_{i=1} e^{\xi _{i}(x,t)-\mathrm{i}\theta _{i}}\left\vert e^{%
\mathrm{i}\theta _{i}},\cdots ,p_{i}^{N-2}e^{\mathrm{i}\theta _{i}},e^{-\xi
_{i}(x,t)},\cdots ,p_{i}^{N}e^{-\xi _{i}(x,t)}\right\vert \,, \nonumber \\
\fl &=& (-1)^N \prod^{N}_{i=1} e^{\xi _{i}(x,t)-\mathrm{i}\theta _{i}}
\left\vert e^{-\xi _{i}(x,t)},\cdots ,p_{i}^{N}e^{-\xi
_{i}(x,t)},e^{\mathrm{i}\theta _{i}},\cdots ,p_{i}^{N-2}e^{\mathrm{i}\theta
_{i}}\right\vert \,, \nonumber \\
\fl &=&(-1)^N \prod^{N}_{i=1} e^{\xi _{i}(x,t)-\mathrm{i}\theta _{i}}
\left\vert e^{\xi _{i}^{\ast }(-x,t)},\cdots
,p_{i}^{N}e^{\xi _{i}^{\ast }(-x,t)},e^{\mathrm{i}\theta _{i}},\cdots
,p_{i}^{N-2}e^{\mathrm{i}\theta _{i}}\right\vert\,, \nonumber\\
\fl  &=& (-1)^{N+1} C' \tau^{\ast} _{N+1,N-1}(-x,t)\,.
\end{eqnarray}
\begin{eqnarray}
\fl \tau _{N,N}(x,t)&=& \left\vert 0,\cdots ,N-1,0^{^{\prime }},\cdots
,(N-1)^{^{\prime }}\right\vert\,,  \nonumber \\
\fl &=&\left\vert e^{\xi _{i}(x,t)},\cdots ,p_{i}^{N-1}e^{\xi
_{i}(x,t)},e^{-\mathrm{i}\theta _{i}},\cdots ,p_{i}^{N-1}e^{-\mathrm{i}%
\theta _{i}}\right\vert\,,  \nonumber \\
\fl &=&\prod^{N}_{i=1} e^{\xi_{i}(x,t)-\mathrm{i}\theta _{i}}
\left\vert e^{\mathrm{i}\theta _{i}},\cdots
,p_{i}^{N-1}e^{\mathrm{i}\theta _{i}},e^{-\xi _{i}(x,t)},\cdots
,p_{i}^{N-1}e^{-\xi _{i}(x,t)}\right\vert\,,  \nonumber \\
\fl &=&(-1)^{N+1}\prod^{N}_{i=1} e^{\xi_{i}(x,t)-\mathrm{i}\theta _{i}} \left\vert e^{-\xi _{i}(x,t)},\cdots ,p_{i}^{N-1}e^{-\xi
_{i}(x,t)},e^{\mathrm{i}\theta _{i}},\cdots ,p_{i}^{N-1}e^{\mathrm{i}\theta
_{i}}\right\vert\,,  \nonumber \\
\fl &=& (-1)^{N}\prod^{N}_{i=1} e^{\xi_{i}(x,t)-\mathrm{i}\theta _{i}} {\left\vert e^{\xi
_{i}^{\ast }(-x,t)},\cdots ,p_{i}^{N-1}e^{\xi _{i}^{\ast }(-x,t)},e^{\mathrm{%
i}\theta _{i}},\cdots ,p_{i}^{N-1}e^{\mathrm{i}\theta _{i}}\right\vert} \,,  \nonumber \\
\fl &=& (-1)^{N+1} C' \tau^{\ast} _{N,N}(-x,t)\,.
\end{eqnarray}
For the case of complex conjuate wavenumbers pairs, it can be approved the same way as the Lemma in previous subsection. So we omit the proof here.
\end{proof}
Therefore if we define $\tau_{N,N}(x,t)=\sqrt{C'}f(x,t)$, $\tau_{N+1,N-1}(x,t)=\sqrt{C'}g(x,t)$, $\tau _{N-1,N+1}(x,t)=-\sqrt{C'} \bar{g}(x,t)$, then the following bilinear equations follow
\begin{equation}
\label{bilinear_bright3}
\left\{
\begin{array}{l}
\displaystyle(\mathrm{i} D_{t}-D_{x}^{2})g(x,t) \cdot f(x,t)=0\,, \\[5pt]
\displaystyle D^2_{x}f(x,t) \cdot f(x,t) -2g(x,t) g^{\ast} (-x,t)=0\,.%
\end{array}%
\right.
\end{equation}
Moreover, if we define
\begin{equation}
\label{NLS_brightDW}
q(x,t)=\frac{g(x,t)}{f(x,t)}, \quad r(x,t)= \frac{\bar{g}(x,t)}{f(x,t)}\,,
\end{equation}
we then have
\begin{equation}
r(x,t)=-q^{\ast }(-x,t).
\end{equation}
To summarize, we have the general $N$-bright soliton solution
expressed by \begin{equation}
\label{NLS_brightDW2}
q(x,t)=\frac{g'(x,t)}{f'(x,t)},
\end{equation}
 where
\begin{equation}
f'(x,t)=\left\vert
\begin{array}{cccccc}
\tilde\phi _{1}^{(0)} & \cdots & \tilde\phi _{1}^{(N-1)} & \tilde\psi _{1}^{(0)} & \cdots &
\tilde\psi _{1}^{(N-1)} \\
\tilde\phi _{2}^{(0)} & \cdots & \tilde\phi _{2}^{(N-1)} & \tilde\psi _{2}^{(0)} & \cdots &
\tilde\psi _{2}^{(N-1)} \\
\vdots & \vdots & \vdots & \vdots & \vdots & \vdots \\
\tilde\phi _{2N}^{(0)} & \cdots & \tilde\phi _{2N}^{(N-1)} & \tilde\psi _{2N}^{(0)} & \cdots
& \tilde\psi _{2N}^{(N-1)}%
\end{array}%
\right\vert _{2N\times 2N}\,,
\end{equation}
\begin{equation}
g'(x,t)=\left\vert
\begin{array}{cccccc}
\tilde\phi _{1}^{(0)} & \cdots & \tilde\phi _{1}^{(N)} & \tilde\psi _{1}^{(0)} & \cdots &
\tilde\psi _{1}^{(N-2)} \\
\tilde\phi _{2}^{(0)} & \cdots & \tilde\phi _{2}^{(N)} & \tilde\psi _{2}^{(0)} & \cdots &
\tilde\psi _{2}^{(N-2)} \\
\vdots & \vdots & \vdots & \vdots & \vdots & \vdots \\
\tilde\phi _{2N}^{(0)} & \cdots & \tilde\phi _{2N}^{(N)} & \tilde\psi _{2N}^{(0)} & \cdots
& \tilde\psi _{2N}^{(N-2)}%
\end{array}%
\right\vert _{2N\times 2N}\,,
\end{equation}
here {$\tilde\phi _{i}^{(n)}$ and $\tilde\psi _{i}^{(n)}$ take the form
\[
\tilde\phi _{i}^{(n)}=p_{i}^{n}e^{p_{i}x - \mathrm{i} p_{i}^{2} t}, \quad \tilde \psi _{i}^{(n)}=p_{i}^{n}e^{-\mathrm{i} \theta_i}\,.
\]%
}In what follows, we will illustrate
one- and two-soliton solutions and make some comments. By taking $N=1$, we
get the tau functions for one-soliton solution,
\begin{eqnarray}
f'(x,t) &=&\left\vert
\begin{array}{cc}
e^{\xi _{1}} & e^{-\mathrm{i}\theta _{1}} \\
e^{\xi _{2}} & e^{-\mathrm{i}\theta _{2}}%
\end{array}%
\right\vert  =e^{\xi _{1}-\mathrm{i}\theta _{2}}(1-e^{\mathrm{i}\theta _{2}-%
\mathrm{i}\theta _{1}}e^{\xi _{2}-\xi _{1}})
\end{eqnarray}

\begin{eqnarray}
g'(x,t) &=&\left\vert
\begin{array}{cc}
e^{\xi _{1}} & p_{1}e^{\xi _{1}} \\
e^{\xi _{2}} & p_{2}e^{\xi _{2}}%
\end{array}%
\right\vert
=(p_{2}-p_{1})e^{\xi _{2}+\xi _{1}}\,.
\end{eqnarray}
The above tau functions lead to the one-soliton as follows
\begin{eqnarray}
q(x,t) &=&\frac{(p_{2}-p_{1})e^{\xi _{2}+\xi _{1}}}{e^{\xi _{1}-\mathrm{i}\theta
_{2}}(1-e^{\mathrm{i}\theta _{2}-\mathrm{i}\theta _{1}}e^{\xi _{2}-\xi _{1}})%
} =\frac{(p_{2}-p_{1})e^{\xi _{2}+\mathrm{i}\theta _{2}}}{(1-e^{\mathrm{i}%
\theta _{2}-\mathrm{i}\theta _{1}}e^{\xi _{2}-\xi _{1}})}
\end{eqnarray}
 {which can be shown to be exactly the same as the one-soliton solution obtained in previous subsection, thus the same solution found in \cite{AblowitzMusslimani3}.}

\section{Soliton solution to the nonlocal NLS equaotion with nonzero boundary condition}
In this section, we consider the general soliton solution to the nonlocal NLS equation (\ref{NLS_PT}) with  {the same nonzero boundary condition as considered in \cite{AblowitzLuoMusslimani}
\begin{equation}\label{NZBCs}
 q(x,t)\rightarrow \rho e^{\mathrm{i}\theta_{\pm}} \,, \ (\rho>0) \quad   \textrm{as} \ \ x\rightarrow \pm
\infty\,,
\end{equation}
where $\Delta \theta =\theta_+-\theta_-$ is either $0$ or $\pi$.}
Similar to the classical NLS equation, to construct soliton solutions of the nonlocal NLS equation with NZBCs,
we need to start with the tau functions for single-component KP hierarchy expressed in Gram-type determinants
\begin{equation}
\tau _{k}=\left\vert m_{ij}(k)\right\vert _{1\leq i,j\leq N}
\end{equation}
where
\begin{equation}
m_{ij}(k)=c_{ij}+\frac{1}{p_{i}+\bar{p}_{j}}\left( -\frac{p_{i}}{%
\bar{p}_{j}}\right) ^{k}e^{\xi _{i}+\bar{\xi }_{j}}
\end{equation}
\begin{eqnarray*}
\xi _{i} &=&p_{i}^{-1}x_{-1}+p_{i}x_{1}+p_{i}^{2}x_{2}+\xi _{0i}+\cdots \\
\bar{\xi }_{i} &=&\bar{p}_{i}^{-1}x_{-1}+\bar{p}_{i}x_{1}-%
\bar{p}_{i}^{2}x_{2}+\bar{\xi }_{0i}+\cdots
\end{eqnarray*}
Based on the Sato theory, the above tau functions satisfies the following
bilinear equations
\begin{equation}
\label{GramNZ_blinear}
\left\{
\begin{array}{l}
\displaystyle(D_{x_{2}}-D_{x_{1}}^{2})\tau _{k+1}\cdot \tau _{k}=0, \\
\displaystyle(\frac{1}{2}D_{x_{1}}D_{x_{-1}}-1)\tau _{k}\cdot \tau
_{k}=-\tau _{k+1}\tau _{k-1}.%
\end{array}%
\right.
\end{equation}
As stated in the subsequent sections, above two set of bilinear equations will be the key in constructing soliton solutions to the nonlocal NLS equation.  {We also assume $c_{ij}=c_{i}\delta _{ij}$ hereafter.}
\subsection{General soliton solution to the nonlocal NLS equation of $\sigma=1$, {$\Delta \theta=0$} with nonzero boundary condition}
The nonlocal NLS equation (\ref{NLS_PT}) of $\sigma=1$ is converted into a set of bilinear equations
\begin{equation}
\label{ndNLS_bilinear}
\left\{
\begin{array}{l}
\displaystyle(\mathrm{i}D_{t}-D_{x}^{2})g\cdot f=0, \\
\displaystyle(D_{x}^{2}-2 \rho ^{2})f\cdot f=-2 \rho
^{2}g(x,t)g^{\ast }(-x,t)\,,%
\end{array}%
\right.
\end{equation}
via a variable transformation
\begin{equation}
\label{AKNS_trf}
q(x,t)=\rho \frac{g(x,t)}{f(x,t)}e^{\mathrm{i}2 \rho ^{2}t}\,
\end{equation}
under the condition $f(x,t)=f^{\ast} (-x,t)$.
In what follows, we will show how to reduce the bilinear equations (\ref{GramNZ_blinear}) to the bilinear equations
(\ref{ndNLS_bilinear}) by the KP hierarchy reduction method.

Firstly, we perform the dimension reduction. Note that, by row operations, $%
\tau _{k}$ can be rewritten as
\begin{equation}
\tau _{k}=\prod_{j=1}^{N}e^{\xi _{j}+\bar{\xi}_{j}}\left\vert m_{ij}^{\prime
}\right\vert =\prod_{j=1}^{N}e^{\xi _{j}+\bar{\xi}_{j}}\left\vert
c_{i}\delta _{ij}e^{-(\xi _{i}+\bar{\xi }_{j})}+\frac{1}{p_{i}+%
\bar{p}_{j}}\left( -\frac{p_{i}}{\bar{p}_{j}}\right)
^{k}\right\vert \,.
\end{equation}
Since
\begin{equation}
(\partial _{x_{1}}- \rho ^{2}\partial _{x_{-1}})m_{ij}^{{\prime
}}=(p_{i}+\bar{p}_{j})\left( 1-\frac{\rho ^{2}}{p_{i}\bar{p}%
_{j}}\right) c_{i}\delta _{ij}e^{-(\xi _{i}+\bar{\xi }_{j})},
\end{equation}
thus, if we impose the constraints
\begin{equation}
p_{j}\bar{p}_{j}=\rho ^{2},\quad j=1,\cdots ,N\,,
\end{equation}
then
\begin{equation}
\partial _{x_{1}}m_{ij}^{{\prime }}=\rho ^{2}\partial
_{x_{-1}}m_{ij}^{{\prime }},
\end{equation}
which leads to
\begin{equation}
\partial _{x_{1}}\tau _{k}=\rho ^{2}\partial _{x_{-1}}\tau _{k}.
\end{equation}
Therefore, the bilinear equations (\ref{GramNZ_blinear}) become
\begin{equation}
\label{Gram_blinear_dim}
\left\{
\begin{array}{l}
\displaystyle(D_{x_{2}}-D_{x_{1}}^{2})\tau _{k+1}\cdot \tau _{k}=0, \\
\displaystyle(D^2_{x_{1}}-2 \rho^2)\tau _{k}\cdot \tau
_{k}=-2 \rho^2\tau _{k+1}\tau _{k-1}.%
\end{array}%
\right.
\end{equation}
Furthermore, we assume $x_{1}=x$, $x_{2}=-\mathrm{i}t$ and define
\begin{equation}
\tau _{0}(x,t)=Cf(x,t),\
\tau _{1}(x,t)=Cg(x,t),\
\tau _{-1}(x,t)=C\bar{g}(x,t),
\end{equation}
with $C=\prod_{j=1}^{N}e^{\xi _{j}+\bar{\xi}_{j}}$, then the bilinear equations become
\begin{equation}
\label{ndNLS_bilinear2}
\left\{
\begin{array}{l}
\displaystyle(\mathrm{i}D_{t}-D_{x}^{2})g(x,t)\cdot f(x,t)=0, \\
\displaystyle(D_{x}^{2}-2\rho ^{2})f(x,t)\cdot f(x,t)=-2\rho ^{2}g(x,t)\bar{g}(x,t)\,.%
\end{array}%
\right.
\end{equation}
\begin{lemma}
Consider $2N \times 2N$ matrices for $f(x,t)$, $g(x,t)$ and $\bar{g}(x,t)$. If $\bar{p}_{j}=p_{j}^{\ast}$ to be the complex conjugate of $p_{j}$, let $c_{N+j}=-c_{j}^{\ast}$,
$p_{N+j}=-p_{j}$ to be complex, for $j=1,\cdots, N$, and $\xi _{0,N+j}=\xi _{0j}$ to be real, then we have
\begin{equation}
f(x,t)=f^{\ast }(-x,t), \quad \bar{g}(x,t)=g^{\ast }(-x,t).
\end{equation}
\end{lemma}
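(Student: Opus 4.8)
The plan is to follow the pattern of the two preceding lemmas: first establish a symmetry of the matrix entries $m_{ij}(k)$ under complex conjugation combined with $x\mapsto -x$ and $k\mapsto -k$, and then transfer it to the determinants $\tau_k$ and hence to $f=\tau_0/C$, $g=\tau_1/C$, $\bar g=\tau_{-1}/C$. With the dimension reduction $p_j\bar p_j=\rho^2$ already in force, the hypothesis $\bar p_j=p_j^{\ast}$ forces $|p_j|=\rho$, and the $2N$ wave numbers split into pairs $\{p_j,\ p_{N+j}=-p_j\}$ with $\bar p_{N+j}=p_{N+j}^{\ast}=-\bar p_j$ and $\xi_{0,N+j}=\xi_{0j}\in\mathbb{R}$ (and likewise $\bar\xi_{0,N+j}=\bar\xi_{0j}$; in any event the constant phases $\xi_{0i},\bar\xi_{0i}$ can be absorbed into the $c_i$). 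Putting $x_1=x$, $x_2=-\mathrm{i}t$, with $x_{-1}$ inert after the reduction and hence set to zero, I would first record the explicit forms $\xi_i=p_ix-\mathrm{i}p_i^2t+\xi_{0i}$ and $\bar\xi_j=\bar p_jx+\mathrm{i}\bar p_j^2t+\bar\xi_{0j}$.

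The core of the proof is the entrywise identity
\[
\bigl[m_{ij}(k)(-x,t)\bigr]^{\ast}\;=\;-\,m_{\iota(j),\,\iota(i)}(-k)(x,t),
\]
where $\iota$ is the pairing involution $j\mapsto j+N$ on $\{1,\dots,2N\}$ (so $\iota(N+j)=j$). I would verify it by tracking the four pieces of $m_{ij}(k)=c_i\delta_{ij}+\frac{1}{p_i+\bar p_j}\bigl(-\frac{p_i}{\bar p_j}\bigr)^k e^{\xi_i+\bar\xi_j}$ separately: conjugating $c_i\delta_{ij}$ and using $c_{N+i}=-c_i^{\ast}$ produces $-c_{\iota(j)}\delta_{\iota(j),\iota(i)}$; the prefactor becomes $\frac{1}{\bar p_i+p_j}=-\frac{1}{p_{\iota(j)}+\bar p_{\iota(i)}}$, so that its minus sign coincides with the one coming from $c$ and the whole relation carries a single overall $-$; the factor $\bigl(-\frac{p_i}{\bar p_j}\bigr)^k$ conjugates to $\bigl(-\frac{\bar p_i}{p_j}\bigr)^k=\bigl(-\frac{p_{\iota(j)}}{\bar p_{\iota(i)}}\bigr)^{-k}$, which is where $k$ flips sign; and the exponential, under conjugation together with $x\mapsto -x$ and $\overline{-\mathrm{i}t}=\mathrm{i}t$, reproduces $e^{\xi_{\iota(j)}+\bar\xi_{\iota(i)}}$, using $p_{N+j}=-p_j$ (so $p_{N+j}^2=p_j^2$) and the reality of the constant phases. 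This must be checked in each of the four index sectors ($i,j\le N$; $i,j>N$; and the two mixed ones), but because $\iota$ is an involution and the hypotheses are symmetric under $j\leftrightarrow j+N$ (note $c_j=-c_{N+j}^{\ast}$ as well), the remaining sectors follow from the same computation.

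Passing to determinants, the identity reads $\bigl[M(k)(-x,t)\bigr]^{\ast}=-\,Q^{T}M(-k)(x,t)^{T}Q$ with $Q$ the permutation matrix of $\iota$; since the matrices have even size $2N$, the factor $(-1)^{2N}=1$ and $\det(Q)^2=1$, $\det(\cdot^{T})=\det(\cdot)$, so $\tau_k^{\ast}(-x,t)=\tau_{-k}(x,t)$. Taking $k=0,1$ gives $\tau_0^{\ast}(-x,t)=\tau_0(x,t)$ and $\tau_1^{\ast}(-x,t)=\tau_{-1}(x,t)$. (It is precisely the evenness of the size that kills the sign here; an odd-size determinant would give $\tau_k^{\ast}(-x,t)=-\tau_{-k}(x,t)$, incompatible with the reduction — consistent with only $2N\times 2N$ soliton determinants occurring in this case.) Finally I would check that $C=\prod_{j=1}^{2N}e^{\xi_j+\bar\xi_j}$ satisfies $C^{\ast}(-x,t)=C$: within each pair the $x$-dependent parts of $\xi_j+\bar\xi_j$ and $\xi_{N+j}+\bar\xi_{N+j}$ cancel, the surviving exponent $2\mathrm{i}(\bar p_j^2-p_j^2)t=4\,\mathrm{Im}(p_j^2)\,t$ is real (using $\bar p_j=p_j^{\ast}$), and $\xi_{0j}+\bar\xi_{0j}\in\mathbb{R}$, so $C$ is real and independent of $x$. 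Dividing then yields $f^{\ast}(-x,t)=\tau_0^{\ast}(-x,t)/C^{\ast}(-x,t)=\tau_0(x,t)/C=f(x,t)$ and $g^{\ast}(-x,t)=\tau_1^{\ast}(-x,t)/C=\bar g(x,t)$, which is the assertion.

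I expect the main obstacle to be the entrywise identity: one must reconcile simultaneously the transpose (the swap $i\leftrightarrow j$), the flip $k\mapsto -k$, the pairing $\iota$, and the single overall sign assembled from $c_{N+j}=-c_j^{\ast}$ and $p_{N+j}=-p_j$, and confirm that these are mutually consistent in every index sector — it is very easy to be off by a sign or by a transpose. The remaining steps (the evenness of the size fixing the determinant sign, and the reality and $x$-independence of $C$) are routine, the latter relying only on combining $\bar p_j=p_j^{\ast}$ with $p_j\bar p_j=\rho^2$ and on the reality of $\xi_{0j}$.
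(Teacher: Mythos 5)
Your proof is correct and follows essentially the same route as the paper's: the key relation $(\xi_{N+i}+\bar\xi_{N+i})(x,t)=(\xi_i+\bar\xi_i)^{\ast}(-x,t)$, the index swap $j\leftrightarrow N+j$, the extraction of a factor $-1$ from each of the $2N$ rows (net sign $(-1)^{2N}=1$), and a transpose. The paper carries these out as explicit block-determinant manipulations separately for $f$, $g$ and $\bar g$, whereas you package them into the single entrywise identity $[m_{ij}(k)(-x,t)]^{\ast}=-m_{\iota(j),\iota(i)}(-k)(x,t)$, yielding $\tau_k^{\ast}(-x,t)=\tau_{-k}(x,t)$ for all $k$ at once --- a tidier but equivalent formulation.
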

\begin{proof}
Since
\begin{eqnarray*}
(\xi _{i}+\bar{\xi }_{i})(x,t)=(p_{i}+p_{i}^{\ast
})x-(p_{i}^{2}-(p_{i}^{\ast })^{2})\mathrm{i}t+\xi _{0i}\,,
\end{eqnarray*}
and
\begin{eqnarray*}
(\xi _{N+i}+\bar{\xi}_{N+i})(x,t) &=&(p_{N+i}+p_{N+i}^{\ast
})x-(p_{N+i}^{2}-(p_{N+i}^{\ast })^{2})\mathrm{i}t+\xi _{0,N+i} \\
&=&-(p_{i}+p_{i}^{\ast })x-(p_{i}^{2}-(p_{i}^{\ast })^{2})\mathrm{i}t+\xi
_{0i}\,,
\end{eqnarray*}
so we have
\begin{equation*}
(\xi _{N+i}+\bar{\xi}_{N+i})(x,t)=(\xi _{i}+\bar{\xi }_{i})^{\ast
}(-x,t)\,, \quad (\xi _{i}+\bar{\xi }_{i})(x,t)=(\xi _{N+i}+\bar{\xi}_{N+i})^{\ast
}(-x,t).
\end{equation*}
Note that
\begin{eqnarray}
\fl f(x,t) &=&\left\vert
\begin{array}{cc}
c_{i}\delta _{ij}e^{-(\xi_i+\bar{\xi}_{j})}+\frac{1}{p_{i}+p_{j}^{\ast }} &
\frac{1}{p_{i}+p_{N+j}^{\ast }} \\
\fl \frac{1}{p_{N+i}+p_{j}^{\ast }} & c_{N+i}\delta _{N+i,N+j}e^{-(\xi _{N+i}+%
\bar{\xi}_{N+j})}+\frac{1}{p_{N+i}+p_{N+j}^{\ast }}%
\end{array}%
\right\vert_{1\le i,j \le N} \nonumber \\
\fl &=&\left\vert
\begin{array}{cc}
c_{N+i}\delta _{N+i,N+j}e^{-(\xi _{N+i}+\bar{\xi}_{N+j})}+\frac{1}{%
p_{N+i}+p_{N+j}^{\ast }} & \frac{1}{p_{N+i}+p_{j}^{\ast }} \\
\fl \frac{1}{p_{i}+p_{N+j}^{\ast }} & c_{i}\delta _{ij}e^{-(\xi_i+\bar{\xi}%
_{j})}+\frac{1}{p_{i}+p_{j}^{\ast }}%
\end{array}%
\right\vert \nonumber \\
\fl &=&\left\vert
\begin{array}{cc}
-c_{i}^{\ast }\delta _{ij}e^{-(\xi _{N+i}+\bar{\xi}_{N+j})}-\frac{1}{%
p_{i}+p_{j}^{\ast }} & -\frac{1}{p_{i}+p_{N+j}^{\ast }} \\
-\frac{1}{p_{N+i}+p_{j}^{\ast }} & -c_{N+i}^{\ast }\delta _{i,j}e^{-(\xi_i+%
\bar{\xi}_{j})}-\frac{1}{p_{N+i}+p_{N+j}^{\ast }}%
\end{array}%
\right\vert \nonumber \\
\fl &=&\left\vert
\begin{array}{cc}
c_{i}^{\ast }\delta _{ij}e^{-(\xi _{N+i}+\bar{\xi}_{N+j})}+\frac{1}{%
p_{i}+p_{j}^{\ast }} & \frac{1}{p_{i}+p_{N+j}^{\ast }} \\
\frac{1}{p_{N+i}+p_{j}^{\ast }} & c_{N+i}^{\ast }\delta _{i,j}e^{-(\xi_i+%
\bar{\xi}_{j})}+\frac{1}{p_{N+i}+p_{N+j}^{\ast }}%
\end{array}%
\right\vert
\end{eqnarray}
On the other hand,
\begin{eqnarray}
\fl f(x,t) &=&\left\vert
\begin{array}{cc}
c_{i}\delta _{ij}e^{-(\xi_i+\bar{\xi}_{j})}+\frac{1}{p_{i}+p_{j}^{\ast }} &
\frac{1}{p_{i}+p_{N+j}^{\ast }} \\
\fl \frac{1}{p_{N+i}+p_{j}^{\ast }} & c_{N+i}\delta _{N+i,N+j}e^{-(\xi _{N+i}+%
\bar{\xi}_{N+j})}+\frac{1}{p_{N+i}+p_{N+j}^{\ast }}%
\end{array}%
\right\vert \nonumber \\
\fl &=&\left\vert
\begin{array}{cc}
c_{j}\delta _{ji}e^{-(\xi _{j}+\bar{\xi}_{i})}+\frac{1}{p_{j}+p_{i}^{\ast }}
& \frac{1}{p_{j}^{\ast }+p_{N+j}} \\
\fl \frac{1}{p_{N+i}^{\ast }+p_{j}} & c_{N+j}\delta _{N+j,N+i}e^{-(\xi _{N+j}+%
\bar{\xi}_{N+i})}+\frac{1}{p_{N+i}^{\ast }+p_{N+j}}%
\end{array}%
\right\vert\,,
\end{eqnarray}
thus
\begin{equation}
\fl
f^{\ast }(-x,t)=\left\vert
\begin{array}{cc}
c_{j}\delta _{ji}e^{-(\xi _{i}+\bar{\xi}_{j})^{\ast }(-x,t)}+\frac{1}{%
p_{i}+p_{j}^{\ast }} & \frac{1}{p_{i}+p_{N+j}^{\ast }} \\
\frac{1}{p_{N+i}+p_{j}^{\ast }} & c_{N+j}\delta _{N+j,N+i}e^{-(\xi _{N+i}+%
\bar{\xi}_{N+j})^{\ast }(-x,t)}+\frac{1}{p_{N+i}+p_{N+j}^{\ast }}%
\end{array}%
\right\vert\,.
\end{equation}
Obviously $f(x,t)=f^{\ast }(-x,t)$. Next, let us proceed to prove $g(x,t)=g^{\ast}(-x,t)$. Note $g(x,t)$
can be written as
\begin{eqnarray}
\fl g(x,t) &=&\left\vert
\begin{array}{cc}
c_{i}\delta _{ij}\left( -\frac{p_{i}}{p_{j}^{\ast }}\right) e^{-(\xi _{i}+%
\bar{\xi}_{j})}+\frac{1}{p_{i}+p_{j}^{\ast }} & \frac{1}{p_{i}+p_{N+j}^{\ast
}}\,, \nonumber \\
\fl \frac{1}{p_{N+i}+p_{j}^{\ast }} & c_{N+i}\delta _{N+i,N+j}\left( -\frac{%
p_{N+i}}{p_{N+j}^{\ast }}\right) e^{-(\xi _{N+i}+\bar{\xi}_{N+j})}+\frac{1}{%
p_{N+i}+p_{N+j}^{\ast }}%
\end{array}%
\right\vert, \nonumber \\
\fl &=&\left\vert
\begin{array}{cc}
c_{N+i}\delta _{N+i,N+j}\left( -\frac{p_{N+i}}{p_{N+j}^{\ast }}\right)
e^{-(\xi _{N+i}+\bar{\xi}_{N+j})}+\frac{1}{p_{N+i}+p_{N+j}^{\ast }} & \frac{1%
}{p_{N+i}+p_{j}^{\ast }}\,,\nonumber \\
\fl \frac{1}{p_{i}+p_{N+j}^{\ast }} & c_{i}\delta _{ij}\left( -\frac{p_{i}}{%
p_{j}^{\ast }}\right) e^{-(\xi _{i}+\bar{\xi}_{j})}+\frac{1}{%
p_{i}+p_{j}^{\ast }}%
\end{array}%
\right\vert, \nonumber \\
\fl &=&\left\vert
\begin{array}{cc}
-c_{i}^{\ast }\delta _{ij}\left( -\frac{p_{i}}{p_{j}^{\ast }}\right)
e^{-(\xi _{N+i}+\bar{\xi}_{N+j})}-\frac{1}{p_{i}+p_{j}^{\ast }} & -\frac{1}{%
p_{i}+p_{N+j}^{\ast }},\nonumber \\
\fl -\frac{1}{p_{N+i}+p_{j}^{\ast }} & -c_{N+i}^{\ast }\left( -\frac{p_{N+i}}{%
p_{N+j}^{\ast }}\right) \delta _{N+i,N+j}e^{-(\xi _{i}+\bar{\xi}_{j})}-\frac{%
1}{p_{N+i}+p_{N+j}^{\ast }}%
\end{array}%
\right\vert, \nonumber\\
\fl &=&\left\vert
\begin{array}{cc}
c_{i}^{\ast }\delta _{ij}\left( -\frac{p_{i}}{p_{j}^{\ast }}\right) e^{-(\xi
_{N+i}+\bar{\xi}_{N+j})}+\frac{1}{p_{i}+p_{j}^{\ast }} & \frac{1}{%
p_{i}+p_{N+j}^{\ast }} \\
\fl \frac{1}{p_{N+i}+p_{j}^{\ast }} & c_{N+i}^{\ast }\left( -\frac{p_{N+i}}{%
p_{N+j}^{\ast }}\right) \delta _{N+i,N+j}e^{-(\xi _{i}+\bar{\xi}_{j})}+\frac{%
1}{p_{N+i}+p_{N+j}^{\ast }}%
\end{array}%
\right\vert,
\end{eqnarray}
thus
\begin{equation}
\fl g^{\ast }(-x,t)=\left\vert
\begin{array}{cc}
c_{i}\delta _{ij}\left( -\frac{p_{i}^{\ast }}{p_{j}}\right) e^{-(\xi _{N+i}+%
\bar{\xi}_{N+j})^{\ast }(-x,t)}+\frac{1}{p_{i}^{\ast }+p_{j}} & \frac{1}{%
p_{i}^{\ast }+p_{N+j}} \\
\frac{1}{p_{N+i}^{\ast }+p_{j}} & c_{N+i}\left( -\frac{p_{N+i}^{\ast }}{%
p_{N+j}}\right) \delta _{ij}e^{-(\xi _{i}+\bar{\xi}_{j})^{\ast }(-x,t)}+%
\frac{1}{p_{N+i}^{\ast }+p_{N+j}}%
\end{array}%
\right\vert\,.
\end{equation}
On the other hand,
\begin{eqnarray}
\fl \bar{g}(x,t) &=&\left\vert
\begin{array}{cc}
c_{i}\delta _{ij}\left( -\frac{p_{i}}{p_{j}^{\ast }}\right) ^{-1}e^{-(\xi
_{i}+\bar{\xi}_{j})}+\frac{1}{p_{i}+p_{j}^{\ast }} & \frac{1}{%
p_{i}+p_{N+j}^{\ast }} \nonumber \\
\fl \frac{1}{p_{N+i}+p_{j}^{\ast }} & c_{N+i}\delta _{N+j,N+i}\left( -\frac{%
p_{N+i}}{p_{N+j}^{\ast }}\right) ^{-1}e^{-(\xi _{N+i}+\bar{\xi}_{N+j})}+%
\frac{1}{p_{N+i}+p_{N+j}^{\ast }}%
\end{array}%
\right\vert, \nonumber \\
\fl &=&\left\vert
\begin{array}{cc}
c_{j}\delta _{ji}\left( -\frac{p_{j}}{p_{i}^{\ast }}\right) ^{-1}e^{-(\xi
_{j}+\bar{\xi}_{i})}+\frac{1}{p_{j}+p_{i}^{\ast }} & \frac{1}{%
p_{N+j}+p_{i}^{\ast }} \nonumber \\
\fl \frac{1}{p_{j}+p_{N+i}^{\ast }} & c_{N+j}\delta _{N+j,N+i}\left( -\frac{%
p_{N+j}}{p_{N+i}^{\ast }}\right) ^{-1}e^{-(\xi _{N+j}+\bar{\xi}_{N+i})}+%
\frac{1}{p_{N+i}^{\ast }+p_{N+j}}%
\end{array}%
\right\vert\,.
\end{eqnarray}
Obviously $\bar{g}(x,t)=g^{\ast }(-x,t)$.
\end{proof}
{Based on above Lemma, two sets of bilinear equations (\ref{ndNLS_bilinear}) and (\ref{ndNLS_bilinear2}) coincide.}
Therefore, we have the following theorem regarding the general soliton solution.
\begin{theorem}
The nonlocal NLS equation admits the solution
\begin{equation}
\label{dNLS_sol1}
q(x,t)=\rho \frac{g(x,t)}{f(x,t)}e^{\mathrm{i}2\rho ^{2}t}\,,
\end{equation}
where
\begin{equation}
\label{dNLS_sol2}
 f(x,t)=\left\vert
c_{i}\delta _{ij}e^{-(\xi _{i}+\bar{\xi}_{j})}+\frac{1}{p_{i}+p_{j}^{\ast }}
\right\vert _{0\leq i,j\leq 2N}\,,
\end{equation}
\begin{equation}
\label{dNLS_sol3}
g(x,t)=\left\vert
c_{i}\delta _{ij}\left( -\frac{p_{i}}{p_{j}^{\ast }}\right) e^{-(\xi _{i}+%
\bar{\xi}_{j})}+\frac{1}{p_{i}+p_{j}^{\ast }}
\right\vert _{0\leq i,j\leq 2N}\,,
\end{equation}
with $c_{N+i}=-c_{i}^{\ast }$, $p_{N+i}=-p_{i}$, $\xi_{0,N+i}=\xi _{0i}$ for $i=1,\cdots,N$.
\end{theorem}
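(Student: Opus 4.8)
The strategy is to descend from the single-component KP hierarchy to the nonlocal NLS equation by chaining together the bilinear identities of this section with the Lemma just proved. First I would take the $2N\times 2N$ Gram-type tau functions $\tau_k = |m_{ij}(k)|$ with $c_{ij} = c_i\delta_{ij}$, which satisfy the bilinear equations (\ref{GramNZ_blinear}) by Sato theory. I would then carry out the dimension reduction exactly as above: imposing $p_j\bar p_j = \rho^2$ for $j = 1,\dots,2N$ forces $\partial_{x_1}m'_{ij} = \rho^2\,\partial_{x_{-1}}m'_{ij}$, hence $\partial_{x_1}\tau_k = \rho^2\,\partial_{x_{-1}}\tau_k$, so that $x_{-1}$ becomes a dummy variable that can be set to zero; this turns (\ref{GramNZ_blinear}) into (\ref{Gram_blinear_dim}). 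Setting $x_1 = x$, $x_2 = -\mathrm{i}t$ and $\tau_0 = Cf$, $\tau_1 = Cg$, $\tau_{-1} = C\bar g$ with $C = \prod_{j}e^{\xi_j+\bar\xi_j}$ then produces the bilinear system (\ref{ndNLS_bilinear2}).

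Next I would feed in the preceding Lemma: with $\bar p_j = p_j^{*}$, $c_{N+j} = -c_j^{*}$, $p_{N+j} = -p_j$ and $\xi_{0,N+j} = \xi_{0j}$ real for $j = 1,\dots,N$, one has $f(x,t) = f^{*}(-x,t)$ and $\bar g(x,t) = g^{*}(-x,t)$; these restrictions are compatible with $p_j\bar p_j = \rho^2$, which together with $\bar p_j = p_j^{*}$ simply says $|p_j|^2 = \rho^2$, and then $p_{N+j}\bar p_{N+j} = \rho^2$ follows automatically. Substituting $\bar g(x,t) = g^{*}(-x,t)$ into the second equation of (\ref{ndNLS_bilinear2}) makes it coincide with the second equation of (\ref{ndNLS_bilinear}), the first equations already being identical, so $f$ and $g$ solve (\ref{ndNLS_bilinear}) together with the condition $f(x,t) = f^{*}(-x,t)$. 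Finally I would invoke the variable transformation (\ref{AKNS_trf}): a direct and routine computation, which I would only sketch by differentiating $q = \rho\,(g/f)\,e^{\mathrm{i}2\rho^2 t}$ and using the two bilinear equations, shows that this $q$ satisfies the nonlocal NLS equation (\ref{NLS_PT}) with $\sigma = 1$, so (\ref{dNLS_sol1})--(\ref{dNLS_sol3}) is the claimed solution.

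The delicate part is not any single step in isolation but the bookkeeping that ties them together: one must check that the dimension-reduction constraint $p_j\bar p_j = \rho^2$, the conjugation constraints $\bar p_j = p_j^{*}$, $p_{N+j} = -p_j$, $c_{N+j} = -c_j^{*}$, $\xi_{0,N+j} = \xi_{0j}$ real, and the reality/imaginarity choices forced by $x_2 = -\mathrm{i}t$ are all mutually consistent and are each used precisely where they are needed (the first in the $x_{-1}$ reduction, the rest in the Lemma). A secondary point, if one also wants the nonzero boundary values $q\to\rho e^{\mathrm{i}\theta_{\pm}}$ as $x\to\pm\infty$, is to track which exponentials $e^{\xi_i+\bar\xi_i}$ dominate $f$ and $g$ in each limit and read off that $g/f$ tends to a unimodular constant; that, rather than the PDE verification itself, is where genuine care is required.
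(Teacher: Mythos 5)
Your proposal is correct and follows essentially the same route as the paper: dimension reduction of the single-component KP bilinear equations via $p_j\bar p_j=\rho^2$, the substitution $x_1=x$, $x_2=-\mathrm{i}t$ with $\tau_0=Cf$, $\tau_1=Cg$, $\tau_{-1}=C\bar g$, and then the preceding Lemma (under $\bar p_j=p_j^{*}$, $c_{N+j}=-c_j^{*}$, $p_{N+j}=-p_j$, $\xi_{0,N+j}=\xi_{0j}$) to identify $\bar g(x,t)=g^{*}(-x,t)$ and $f(x,t)=f^{*}(-x,t)$, which makes the two bilinear systems coincide and yields the solution through the transformation $q=\rho\,(g/f)\,e^{\mathrm{i}2\rho^2 t}$. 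Your added remarks on the mutual consistency of the constraints and on verifying the boundary values are sensible supplements but do not change the argument.
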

In what follows, we  give explicit form of the solution for $N=1$ and detailed analysis.
By taking $N=1$ in (\ref{dNLS_sol2}) and (\ref{dNLS_sol3}) and $p_1=\rho e^{-\mathrm{i} \theta}$
we get the tau functions
\begin{eqnarray}
 f &=&\left\vert
\begin{array}{cc}
c_{1}e^{-(\xi _{1}+\bar{\xi}_{1})}+\frac{1}{p_{1}+p_{1}^{\ast }} & \frac{1}{%
p_{1}+p_{2}^{\ast }} \\
 \frac{1}{p_{2}+p_{1}^{\ast }} & -c_{1}^{\ast }e^{-(\xi _{2}+\bar{\xi}_{2})}+%
\frac{1}{p_{2}+p_{2}^{\ast }}%
\end{array}%
\right\vert \nonumber\\
\fl &=&-\rho ^{-2}(|c_{1}|^{2}\rho ^{2}e^{4\rho ^{2}\sin 2\theta t}+\frac{%
1}{2}\rho c_{1}\sec \theta e^{-2\rho \cos \theta x+2\rho ^{2}\sin 2\theta t}
\nonumber \\
&&+\frac{1}{2}\rho c_{1}^{\ast }\sec \theta e^{2\rho \cos \theta x+2\rho
^{2}\sin 2\theta t}+\csc ^{2}2\theta)\,,
\end{eqnarray}

\begin{eqnarray}
 g &=&\left\vert
\begin{array}{cc}
c_{1}e^{-(\xi _{1}+\bar{\xi}_{1})}+\frac{1}{p_{1}+p_{1}^{\ast }}\left( -\frac{p_{1}}{p_{1}^{\ast }}%
\right)  & \frac{1}{p_{1}+p_{2}^{\ast }}\left( -%
\frac{p_{1}}{p_{2}^{\ast }}\right)  \\
 \frac{1}{p_{2}+p_{1}^{\ast }}\left( -\frac{p_{2}}{p_{1}^{\ast }}\right)
 & -c_{1}^{\ast }e^{-(\xi _{2}+\bar{\xi}_{2})}+\frac{1}{p_{2}+p_{2}^{\ast }}%
\left( -\frac{p_{2}}{p_{2}^{\ast }}\right) %
\end{array}%
\right\vert \nonumber \\
\fl &=&-e^{2\mathrm{i}\theta} \rho ^{-2}(|c_{1}|^{2}\rho ^{2}e^{4\rho ^{2}\sin 2\theta
t-2\mathrm{i}\theta}-\frac 12\rho c_{1}\sec \theta e^{-2\rho \cos \theta x+ 2\rho ^{2}\sin 2\theta t
} \nonumber \\
 &&-\frac 12\rho c_{1}^{\ast }\sec \theta e^{2\rho \cos \theta x+2\rho ^{2}\sin
2\theta t}+\csc ^{2}2\theta e^{2\mathrm{i}\theta})\,.
\end{eqnarray}
 {It can be easily shown that the resulting solution $q(x,t)=\rho \frac{g(x,t)}{f(x,t)}e^{\mathrm{i}2 \rho ^{2}t}$ corresponds to the solution (5.90) in \cite{AblowitzLuoMusslimani} with $\sigma=1$ and $\Delta \theta=0$}.
Next, we investigate the asymptotic behavior of above two-soliton solution. To
this end, we assume $0< \theta < \pi/2$ without loss of generality, and $\kappa=2\rho \cos \theta >0$. We define the right-moving soliton along the line
$\eta_1=x-2\rho \sin \theta t$ as soliton 1, and the left-moving soliton along the line $\eta_2=x+2\rho \sin \theta t$ as soliton 2.
For the above choice of parameters, we have
(i) $\eta_1 \approx 0$, $\eta_2 \rightarrow \mp \infty $ as $%
t\rightarrow \mp \infty $ for soliton 1 and (ii) $\eta_2 \approx 0$, $%
\eta_1 \rightarrow \pm \infty $ as $t\rightarrow \mp \infty $ for soliton
2. This leads to the following asymptotic forms for {the} two-soliton solution.
\newline
(i) Before collision ($t\rightarrow -\infty $)

Soliton 1 ($\eta _{1}\approx 0$, $\eta _{2}\rightarrow -\infty $):
\begin{eqnarray}
 q  &\rightarrow &   \rho \frac{\csc ^{2}\theta-\frac 12\rho c_{1}\sec \theta e^{-\kappa \eta_1 -2\mathrm{i}\theta} }
{\csc ^{2}\theta+\frac 12\rho c_{1} \sec \theta e^{-\kappa \eta_1}} e^{\mathrm{i}2\rho ^{2}t+4\mathrm{i}\theta}\,, \nonumber \\
&=&  \rho \frac{1- 2\rho |c_{1}|\sin^2 \theta \cos \theta e^{-\kappa \eta_1 -\mathrm{i}(2\theta-\phi)} }
{1+2\rho |c_{1}|\sin^2 \theta \cos \theta e^{-\kappa \eta_1+\mathrm{i} \phi}} e^{\mathrm{i}2\rho ^{2}t+4\mathrm{i}\theta}\,,
\label{soliton1_aybf}
\end{eqnarray}%
\begin{eqnarray}
&& |q|^2  \rightarrow   \rho^2 \frac{\cosh (\eta_1-\eta^{-}_{10})-\cos (2\theta-\phi)}
{\cosh (\eta_1-\eta^{-}_{10})+\cos \phi}\,.
\label{soliton1_aybf2}
\end{eqnarray}%

Soliton 2 ($\eta_{2} \approx 0$, $\eta_{1} \to \infty$):
\begin{eqnarray}
&& q  \rightarrow   \rho \frac{\csc ^{2}2\theta-\frac 12\rho c^{\ast}_{1}\sec \theta e^{-\kappa \eta_2 -2\mathrm{i}\theta} }
{\csc ^{2}2\theta+\frac 12\rho c^{\ast}_{1} \sec \theta e^{-\kappa \eta_2}} e^{\mathrm{i}2\rho ^{2}t+4\mathrm{i}\theta}\,, \nonumber \\
&=&  \rho \frac{1- 2\rho |c_{1}|\sin^2 \theta \cos \theta e^{-\kappa \eta_1 -\mathrm{i}(2\theta+\phi)} }
{1+2\rho |c_{1}|\sin^2 \theta \cos \theta e^{-\kappa \eta_1-\mathrm{i} \phi}} e^{\mathrm{i}2\rho ^{2}t+4\mathrm{i}\theta}\,,
\label{soliton2_aybf}
\end{eqnarray}%

\begin{eqnarray}
&& |q|^2  \rightarrow   \rho^2 \frac{\cosh (\eta_1-\eta^{-}_{10})-\cos (2\theta+\phi)}
{\cosh (\eta_1-\eta^{-}_{10})+\cos \phi}\,,
\label{soliton2_aybf2}
\end{eqnarray}%
where $e^{\eta^{-}_{10}}=2\rho |c_1| \sin^2 \theta \cos \theta$.

(ii) After collision ($t \to \infty$) \\
Soliton 1 ($\eta_{1} \approx 0$, $\eta_{2} \to \infty$):
\begin{eqnarray}
 q  &\rightarrow&   \rho \frac{|c_{1}|^{2}\rho ^{2}-\frac 12\rho c^{\ast}_{1}\sec \theta e^{\kappa \eta_1 +2\mathrm{i}\theta} }
{|c_{1}|^{2}\rho ^{2}+\frac 12\rho c^{\ast}_{1}\sec \theta e^{\kappa \eta_1 }} e^{\mathrm{i}2\rho ^{2}t}\,, \nonumber \\
&=& \rho \frac{2\rho |c_1| \cos \theta e^{-\kappa \eta_1 -\mathrm{i}(2\theta-\phi)}-1} {2\rho |c_1| \cos \theta e^{-\kappa \eta_1 +\mathrm{i}\phi}+1}
e^{\mathrm{i}2\rho ^{2}t+4\mathrm{i}\theta}
\label{soliton1_ayaf}
\end{eqnarray}%
\begin{eqnarray}
&& |q|^2  \rightarrow   \rho^2 \frac{\cosh (\eta_1-\eta^{+}_{10})-\cos (2\theta-\phi)}
{\cosh (\eta_1-\eta^{+}_{10})+\cos \phi}\,.
\label{soliton1_ayaf2}
\end{eqnarray}%
Soliton 2 ($\eta _{2}\approx 0$, $\eta _{1}\rightarrow -\infty $):
\begin{eqnarray}
 q  &\rightarrow&   \rho \frac{|c_{1}|^{2}\rho ^{2}-\frac 12\rho c_{1}\sec \theta e^{\kappa \eta_2 +2\mathrm{i}\theta} }
{|c_{1}|^{2}\rho ^{2}+\frac 12\rho c_1 \sec \theta e^{\kappa \eta_2 }} e^{\mathrm{i}2\rho ^{2}t}\,, \nonumber \\
&=& \rho \frac{2\rho |c_1| \cos \theta e^{-\kappa \eta_1 -\mathrm{i}(2\theta+\phi)}-1} {2\rho |c_1| \cos \theta e^{-\kappa \eta_1 -\mathrm{i}\phi}+1}
e^{\mathrm{i}2\rho ^{2}t+4\mathrm{i}\theta}
\label{soliton2_ayaf}
\end{eqnarray}%
\begin{eqnarray}
&& |q|^2  \rightarrow   \rho^2 \frac{\cosh (\eta_1-\eta^{+}_{10})-\cos (2\theta+\phi)}
{\cosh (\eta_1-\eta^{+}_{10})+\cos \phi}\,,
\label{soliton2_ayaf2}
\end{eqnarray}%
where $e^{\eta^{+}_{10}}=2\rho |c_1|  \cos \theta$.

\begin{figure}[tbh]
\centering
\includegraphics[height=60mm,width=80mm]{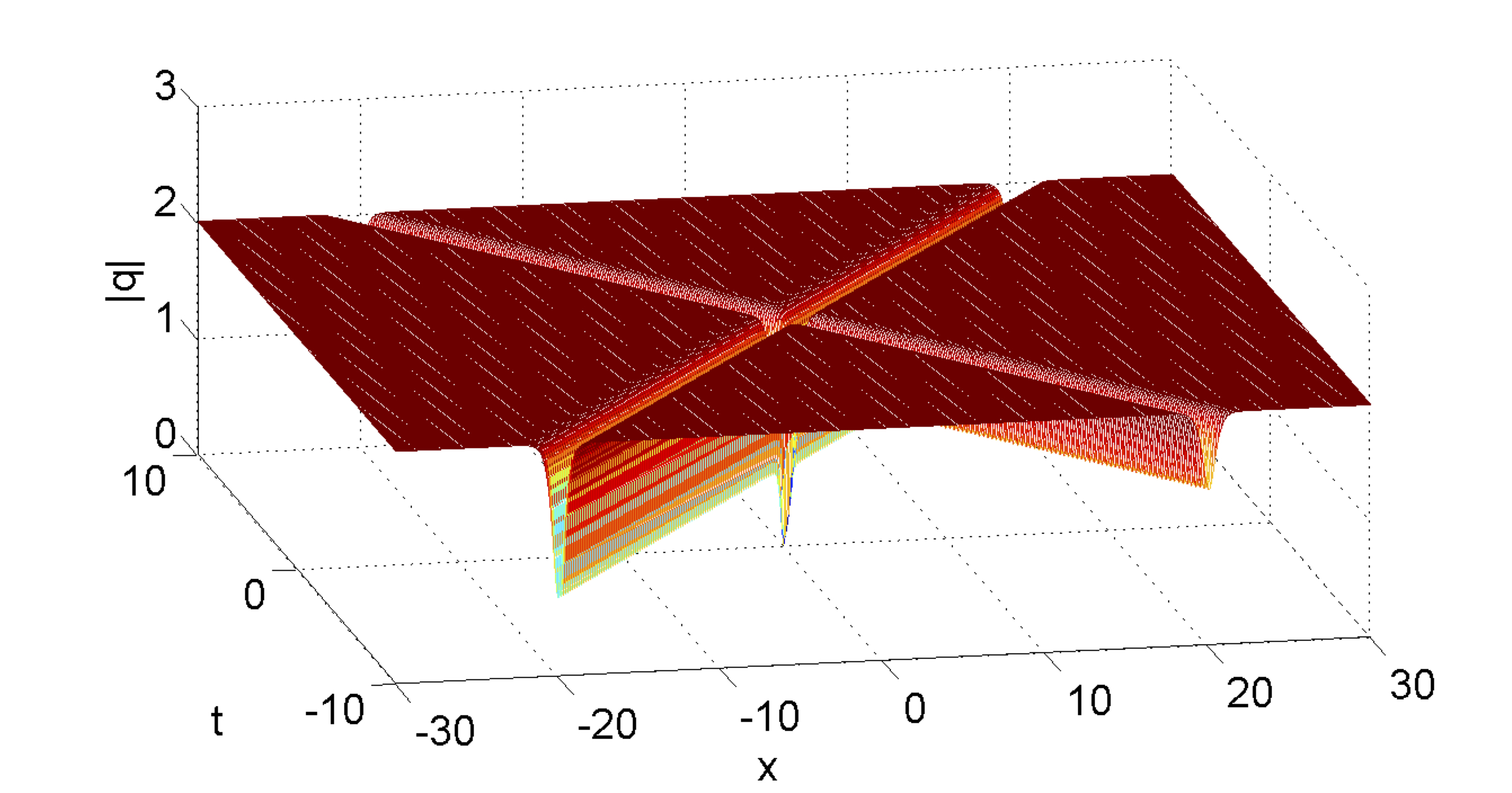}
\caption{Dark-dark soliton with $\rho=2.0$, $c_1=1.0+0.4i$ and $\theta=\pi/6$.}
\label{fig1}
\end{figure}

\begin{figure}[tbh]
\centering
\includegraphics[height=60mm,width=80mm]{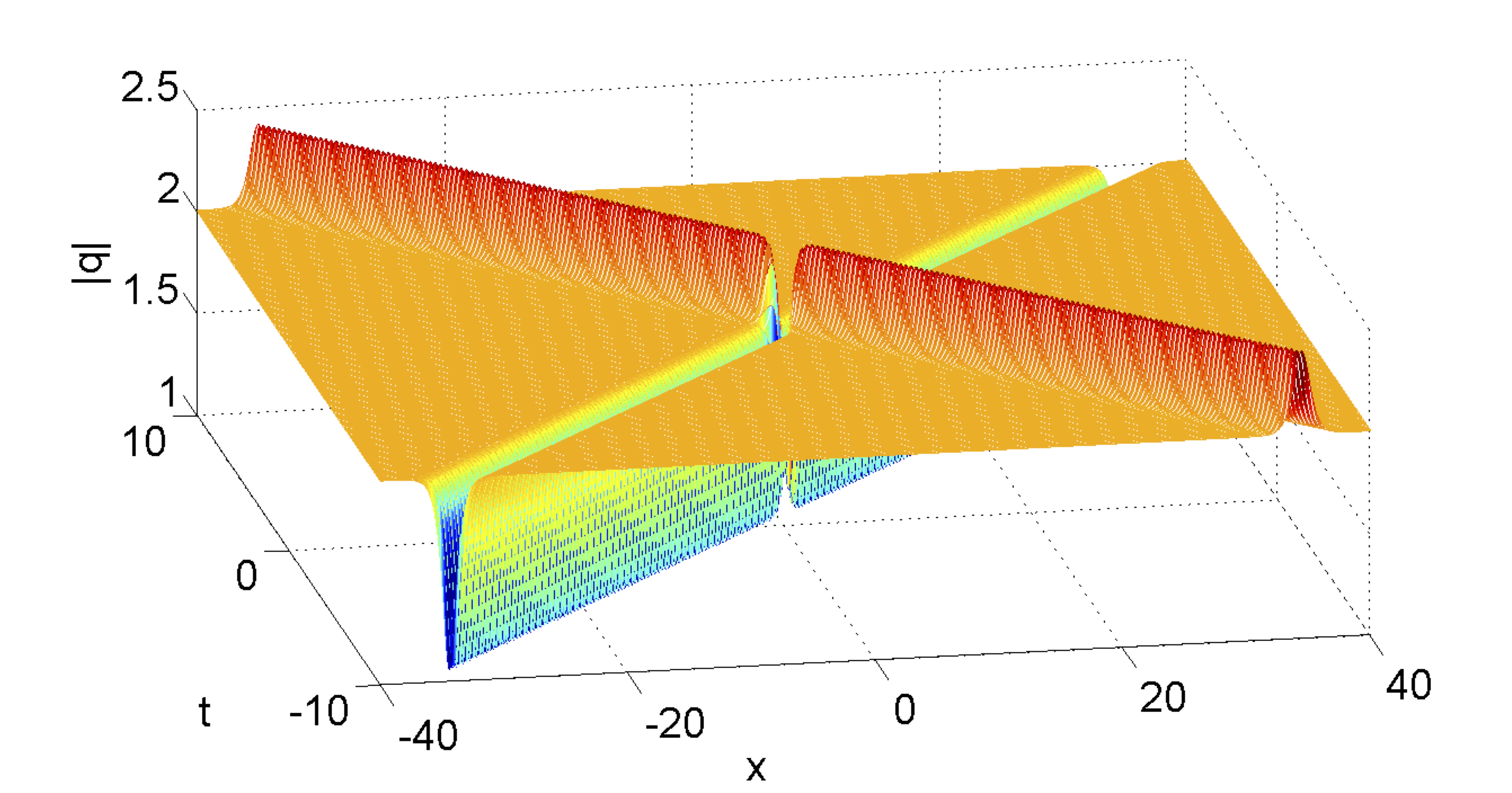}
\caption{Dark-antidark soliton with $\rho=2.0$, $c_1=0.4+i$ and $\theta=\pi/3$.}
\label{fig2}
\end{figure}

\begin{figure}[tbh]
\centering
\includegraphics[height=60mm,width=80mm]{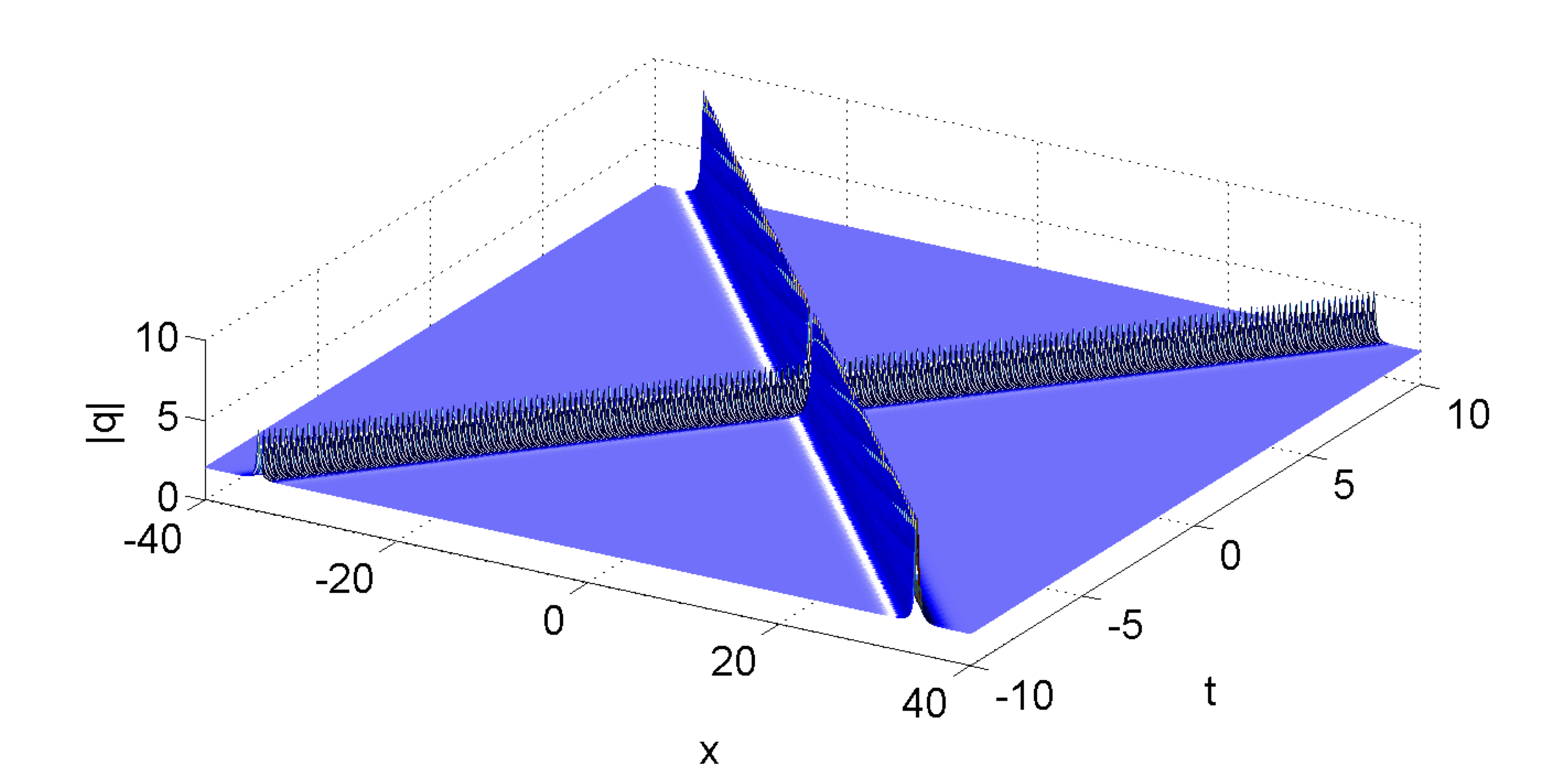}
\caption{Antidark-antidark soliton with $\rho=2.0$, $c_1=-1.0+0.3i$ and $\theta=\pi/3$.}
\label{fig3}
\end{figure}
\subsection{Soliton solution with nonzero boundary condition to the nonlocal NLS equation {with $\sigma =-1, \Delta \theta=\pi.$}
}
In this subsection, we consider the soliton solutions to the nonlocal NLS equation with $\sigma=-1$
\begin{equation}
\mathrm{i}q_{t}(x,t)=q_{xx}(x,t)+2q^{\ast }(-x,t)\,q^{2}(x,t)
\end{equation}
under boundary condition $q(x,t)\rightarrow \rho $ as $x\rightarrow \pm
\infty $. To this end, we introduce a variable transformation
\[
q(x,t)=\rho \frac{g(x,t)}{f(x,t)}e^{ 2\mathrm{i} \epsilon \rho ^{2}t},
\]%
with the condition $f(x,t)=\pm f^{\ast}(-x,t)$, then the nonlocal NLS equation is converted into the following bilinear
equations
 {
\begin{equation}
\label{bilinear_negative}
\left\{
\begin{array}{l}
\displaystyle(\mathrm{i}D_{t}-D_{x}^{2})g(x,t)\cdot f(x,t)=0, \\
\displaystyle(D_{x}^{2}-2\epsilon \rho ^{2})f(x,t)\cdot f(x,t)= \pm 2 \rho^{2}g(x,t)g^{\ast}(-x,t)\,,%
\end{array}%
\right.
\end{equation}
}
We recall the tau functions of single component KP hierarchy $\tau_{k}$ and the
bilinear equations (\ref{GramNZ_blinear}) the tau functions satisfy in previous section.
We take the simplest case $p_{1}=\bar{p}_{1}=\rho$, then the second bilinear equation in
(\ref{GramNZ_blinear}) reduces to the second equation in (\ref{Gram_blinear_dim}).
Furthermore, if we let $c_{1}=\frac{1}{2\rho}$, and define
 {
\begin{equation}
  f(x,t)=\tau_0=\frac{1}{2\rho }\left( 1+e^{\xi _{1}+\bar{\xi }_{1}}\right)\,,
\end{equation}
\begin{equation}
  g(x,t)=\tau_1=\frac{1}{2\rho }\left( 1-e^{\xi _{1}+\bar{\xi }_{1}}\right)\,,
\end{equation}
\begin{equation}
  \bar{g}(x,t)=\tau_{-1}=\frac{1}{2\rho }\left( 1-e^{\xi _{1}+\bar{\xi }_{1}} \right)\,,
\end{equation}
where $e^{\xi _{1}+\bar{\xi }_{1}}=e^{2 \rho x}$. It then follows
\begin{equation}
  f(x,t)=e^{(\xi _{1}+\bar{\xi }_{1})} f^{\ast }(-x,t)\,, \quad \bar{g}(x,t)=-e^{(\xi _{1}+\bar{\xi }_{1})} g^{\ast}(-x,t)\,.
\end{equation}
Furthermore, if we let $x_1=x$, $x_2=- \mathrm{i} t$ and $\epsilon=1$, the bilinear equation (\ref{Gram_blinear_dim}) reduces to the bilinear equation (\ref{bilinear_negative}).
}  Under this case, if we define
\begin{equation}
  q(x,t)=\frac{g(x,t)}{f(x,t)}e^{\mathrm{i}2\rho ^{2}t}\,, \quad r(x,t)=\frac{\bar{g}(x,t)}{f(x,t)}e^{-\mathrm{i}2\rho ^{2}t}\,,
\end{equation}
it is obvious that
\begin{equation}
  r(x,t)=-q^{\ast }(-x,t),
\end{equation}
which leads to the 1-soliton solution
 {
\begin{equation}
\label{Sig-1Detpi}
  q(x,t)=\rho \frac{ 1-e^{\xi _{1}+\bar{\xi }_{1}}}{ 1+e^{\xi _{1}+\bar{\xi }_{1}}} e^{2 \mathrm{i} \rho ^{2}t}
  =\rho e^{2 \mathrm{i} \rho ^{2}t} \tanh (\rho x) \,.
\end{equation}
This solution coincides with the solution (3.200) and (3.201) found in \cite{AblowitzLuoMusslimani} {for the case: $\sigma =-1, \Delta \theta=\pi$.}
}
\subsection{Soliton solution with nonzero boundary condition to the nonlocal NLS equation {with $\sigma =-1, \Delta \theta=0.$}
}
Again for the tau functions of single component KP hierarchy $\tau_{k}$ and the
bilinear equations the tau functions satisfy in previous section. If we impose the condition
$p_{j}\bar{p}_{j}=-\rho ^{2}$,\ $j=1,\cdots ,N$\,,
then similarly, we have
\begin{equation}
\partial_{x_{1}}\tau _{k}=-\rho^{2}\partial _{x_{-1}}\tau _{k}\,.
\end{equation}
Furthermore, if we assume $x_{1}=x$, $x_{2}=-\mathrm{i}t$,
then the bilinear equation (\ref{GramNZ_blinear}) reduces to
{\begin{equation}
\label{bilinear_negative2}
\left\{
\begin{array}{l}
\displaystyle(\mathrm{i}D_{t}-D_{x}^{2}) \tau _{1}(x,t) \cdot \tau _{0}(x,t)=0, \\
\displaystyle(D_{x}^{2}+2\rho ^{2})\tau _{0}(x,t)  \cdot \tau _{0}(x,t)=  2 \rho ^{2}\tau _{1}(x,t)  \tau _{-1}(x,t)\,.%
\end{array}%
\right.
\end{equation}
Consider a $2 \times 2$ matrix for $\tau _{0}(x,t)$, $\tau _{1}(x,t)$ and $\tau _{-1}(x,t)$, let $p_{1}=v_{1}$, $\bar{p}_{1}=-\rho ^{2}/v_{1}$
where $v_{1}>\rho $ is real, $p_{2}=\bar{p}_{1}$, $\bar{p}_{2}=p_{1}$, $c_{2}=-c_{1}$ and
$$
c_{1}^{2}=\frac{1}{%
(p_{1}+\bar{p}_{1})^2}-\frac{1}{4p_{1}\bar{p}_{1}}
$$. \
Since
\[
(\xi _{1}+\bar{\xi }_{1})(x,t)=(v_{1}-\rho ^{2}/v_{1})x+(v_{1}^{2}-\rho
^{4}/v_{1}^{2})\mathrm{i}t\,, \
\]
\[
(\xi _{2}+\bar{\xi}_{2})(x,t)=(v_{1}-\rho ^{2}/v_{1})x-(v_{1}^{2}-\rho
^{4}/v_{1}^{2})\mathrm{i}t\,,
\]
\begin{eqnarray}
\fl \tau _{0}(x,t) &=&\left\vert
\begin{array}{cc}
c_{1}+\frac{1}{p_{1}+\bar{p}_{1}}e^{\xi _{1}+\bar{\xi}_{1}} & \frac{1}{p_{1}+%
\bar{p}_{2}}e^{\xi _{1}+\bar{\xi}_{2}} \\
 \frac{1}{p_{2}+\bar{p}_{1}}e^{\xi _{2}+\bar{\xi}_{1}} & c_{2}+\frac{1}{p_{2}+%
\bar{p}_{2}}e^{\xi _{2}+\bar{\xi}_{2}}%
\end{array}%
\right\vert \nonumber \\
\fl &=&-c_{1}{}^{2}+\frac{c_{1}}{p_{2}+\bar{p}_{2}}e^{\xi _{2}+\bar{\xi}_{2}}-%
\frac{c_{1}}{p_{1}+\bar{p}_{1}}e^{\xi _{1}+\bar{\xi}_{1}}+c_{1}{}^{2}e^{\xi
_{1}+\bar{\xi}_{1}+\xi _{2}+\bar{\xi}_{2}}\, \nonumber \\
\fl &=&2c_{1} e^{(v_1-\rho^2/v_1)x}\left\{c_{1} \sinh \left[(v_1-\rho^2/v_1)x \right] -\frac{\mathrm{i}}{p_{1}+\bar{p}_{1}} \sin \left[(v_1^2-\rho^4/v_1^2)t \right]\right\}\,.
\end{eqnarray}
Therefore
\begin{eqnarray}
\fl  \tau _{0}^{\ast }(-x,t) &=&-e^{-2(v_1-\rho^2/v_1)x} \tau _{0}(x,t)\,.
\end{eqnarray}
Moreover
\begin{eqnarray}
\fl  \tau _{1}(x,t) &=& \left\vert
\begin{array}{cc}
c_{1}+\frac{1}{p_{1}+\bar{p}_{1}}\left( -\frac{p_{1}}{\bar{p}_{1}}\right)
e^{\xi _{1}+\bar{\xi}_{1}} & \frac{1}{p_{1}+\bar{p}_{2}}\left( -\frac{p_{1}}{%
\bar{p}_{2}}\right) e^{\xi _{1}+\bar{\xi}_{2}} \\
 \frac{1}{p_{2}+\bar{p}_{1}}\left( -\frac{p_{2}}{\bar{p}_{1}}\right) e^{\xi
_{2}+\bar{\xi}_{1}} & c_{2}+\frac{1}{p_{2}+\bar{p}_{2}}\left( -\frac{p_{2}}{%
\bar{p}_{2}}\right) e^{\xi _{2}+\bar{\xi}_{2}}%
\end{array}%
\right\vert \nonumber \\
\fl &=&-c_{1}^{2}{}+\frac{c_{1}}{p_{2}+\bar{p}_{2}}\left( -\frac{p_{2}}{%
\bar{p}_{2}}\right) e^{\xi _{2}+\bar{\xi}_{2}}-\frac{c_{1}}{p_{1}+\bar{p}_{1}%
}\left( -\frac{p_{1}}{\bar{p}_{1}}\right) e^{\xi _{1}+\bar{\xi}%
_{1}}+c_{1}^{2}{}e^{\xi _{1}+\bar{\xi}_{1}+\xi _{2}+\bar{\xi}_{2}}\, \nonumber \\
\fl &=&c_{1} e^{(v_1-\frac{\rho^2}{v_1})x}\left\{2 c_{1} \sinh \left[(v_1-\frac{\rho^2}{v_1})x \right] +\frac{1}{p_{1}+\bar{p}_{1}} \left[ \frac{p_{1}}{\bar{p}_{1}}e^{(v_1^2-\frac{\rho^4}{v^2_1})\mathrm{i}t}-%
\frac{\bar{p}_{1}}{p_{1}} e^{-(v_1^2-\frac{\rho^4}{v^2_1})\mathrm{i}t} \right]\right\}\,. \nonumber
\end{eqnarray}
\begin{eqnarray}
\fl \tau _{-1}(x,t) &=&\left\vert
\begin{array}{cc}
c_{1}+\frac{1}{p_{1}+\bar{p}_{1}}\left( -\frac{\bar{p}_{1}}{p_{1}}\right)
e^{\xi _{1}+\bar{\xi}_{1}} & \frac{1}{p_{1}+\bar{p}_{2}}\left( -\frac{\bar{p}%
_{2}}{p_{1}}\right) e^{\xi _{1}+\bar{\xi}_{2}} \\
 \frac{1}{p_{2}+\bar{p}_{1}}\left( -\frac{\bar{p}_{1}}{p_{2}}\right) e^{\xi
_{2}+\bar{\xi}_{1}} & c_{2}+\frac{1}{p_{2}+\bar{p}_{2}}\left( -\frac{\bar{p}%
_{2}}{p_{2}}\right) e^{\xi _{2}+\bar{\xi}_{2}}%
\end{array}%
\right\vert \nonumber \\
\fl &=& -c_{1}^{2}{}+\frac{c_{1}}{p_{2}+\bar{p}_{2}}\left( -\frac{\bar{p}%
_{2}}{p_{2}}\right) e^{\xi _{2}+\bar{\xi}_{2}}-\frac{c_{1}}{p_{1}+\bar{p}_{1}%
}\left( -\frac{\bar{p}_{1}}{p_{1}}\right) e^{\xi _{1}+\bar{\xi}%
_{1}}+c_{1}^{2}{}e^{\xi _{1}+\bar{\xi}_{1}+\xi _{2}+\bar{\xi}_{2}}\,\nonumber \\
\fl &=&c_{1} e^{(v_1-\frac{\rho^2}{v_1})x}\left\{2 c_{1} \sinh \left[(v_1-\frac{\rho^2}{v_1})x \right] +\frac{1}{p_{1}+\bar{p}_{1}} \left[ \frac{\bar{p}_{1}}{p_{1}} e^{(v_1^2-\frac{\rho^4}{v^2_1})\mathrm{i}t}-%
 \frac{p_{1}}{\bar{p}_{1}} e^{-(v_1^2-\frac{\rho^4}{v^2_1})\mathrm{i}t} \right]\right\}\,. \nonumber
\end{eqnarray}
Therefore
\begin{eqnarray}
\fl  \tau _{1}^{\ast }(-x,t) &=&-e^{-2(v_1-\rho^2/v_1)x} \tau _{-1}(x,t)\,.
\end{eqnarray}
Consequently, if we define
\begin{equation}
\tau _{0}(x,t)=Cf(x,t),\
\tau _{1}(x,t)=Cg(x,t),\
\tau _{-1}(x,t)=C\bar{g}(x,t),
\end{equation}
with $C=e^{(v_1-\rho^2/v_1)x}$, then we can easily show that
\begin{equation}
f(x,t)=-f^{\ast }(-x,t), \quad \bar{g}(x,t)=-g^{\ast }(-x,t),
\end{equation}
which reduces the bilinear equation (\ref{bilinear_negative2}) to
\begin{equation}
\label{bilinear_negative3}
\left\{
\begin{array}{l}
\displaystyle(\mathrm{i}D_{t}-D_{x}^{2}) g(x,t)\cdot f(x,t)=0, \\
\displaystyle(D_{x}^{2}+2\rho ^{2})f(x,t) \cdot f(x,t)=2\rho ^{2}{g}(x,t) g^*(-x,t)\,.%
\end{array}%
\right.
\end{equation}
Above bilinear equation coincides with the bilinear equation (\ref{bilinear_negative}) with $\epsilon=-1$ under the case
$f(x,t)=-f^*(-x,t)$. Moreover, for
\begin{equation}
  q(x,t)=\frac{g(x,t)}{f(x,t)}e^{-\mathrm{i}2\rho ^{2}t}\,, \quad r(x,t)=-\frac{\bar{g}(x,t)}{f(x,t)}e^{-\mathrm{i}2\rho ^{2}t}\,,
\end{equation}
it is obvious that
\begin{equation}
  r(x,t)=-q^{\ast }(-x,t),
\end{equation}
which leads to the 1-soliton solution
\begin{equation}
\label{Sig01Det0}
  q(x,t)=\rho \frac{ (v_1^2+\rho^2) \sinh \left[(v_1-\frac{\rho^2}{v_1})x \right] + \frac{2\rho^3}{v_1}e^{-(v_1^2-\frac{\rho^4}{v^2_1})\mathrm{i}t}-%
\frac{2v_1^3}{\rho} e^{(v_1^2-\frac{\rho^4}{v^2_1})\mathrm{i}t} }
  {(v_1^2+\rho^2) \sinh \left[(v_1-\rho^2/v_1)x \right] -4 \mathrm{i} \rho v_1 \sin \left[(v_1^2-\rho^4/v_1^2)t \right]}  e^{-2 \mathrm{i} \rho ^{2}t} \,.
\end{equation}
This solution coincides with the solution (4.130) in \cite{AblowitzLuoMusslimani} {for the case: $\sigma =-1, \Delta \theta=0$}.
}
\section{Summary and concluding remarks}
In this paper, via the combination of Hirota's bilinear method and the KP hierarchy reduction method, we have constructed various general soliton solutions in the form of determinants for {the} nonlocal NLS equation with both zero and nonzero boundary conditions. {The solutions obtained here} recover the solutions found in the nonlocal NLS equation so far and reformulate them in compact form by using tau functions of the KP hierarchy.

A natural extension of the present work will be the construction of the general soliton solutions to the coupled nonlocal NLS equation
\begin{equation}
\left\{
\begin{array}{l}
\displaystyle\mathrm{i}q_{1,t}(x,t)=q_{1,xx}(x,t)-2(\sigma_1 q_1^{2}(x,t)+\sigma_2 q_2^{2}(x,t))q^{\ast}_1(-x,t)\,, \\
\displaystyle \mathrm{i}q_{2,t}(x,t)=q_{2,xx}(x,t)-2(\sigma_1 q_1^{2}(x,t)+\sigma_2 q_2^{2}(x,t))q^{\ast}_2(-x,t)\,,\,.%
\end{array}%
\right.  \label{nlCNLS}
\end{equation}%
It is expected that the soliton solution to {the} above coupled nonlocal NLS equation is {even} richer and more complicated than the single component nonlocal NLS equation. We will explore this interesting topic and report the results elsewhere in the near future.
\section*{Acknowledgements}
MJA was partially supported by NSF under Grant No. {DMS-1712793}. BF was partially supported by NSF Grant under Grant No. (DMS-1715991) and the COS Research Enhancement Seed Grants Program at UTRGV.
\section*{References}

\end{document}